\crefname{equation}{}{}
\theoremstyle{remark}
\newtheorem{lemma}{Lemma}
\newtheorem{proposition}{Proposition}
\newtheorem{remark}{Remark}
\newtheorem{corollary}{Corollary}
\begin{document}

\title{Performance Tradeoff Between Overhead and Achievable SNR in RIS Beam Training}

\author{Friedemann~Laue,~\IEEEmembership{Graduate~Student~Member,~IEEE}, Vahid~Jamali,~\IEEEmembership{Member,~IEEE}, and Robert~Schober,~\IEEEmembership{Fellow,~IEEE}
\thanks{This work was presented in part at~\cite{laue2022performancetradeoffris} \textit{(Corresponding author: Friedemann Laue.)}}
\thanks{V. Jamali's work is supported in part by the German Research Foundation (DFG) under CRC MAKI and in part by the LOEWE initiative (Hesse, Germany) within the emergenCITY center.}
\thanks{F. Laue and R. Schober are with Institute for Digital Communications, Friedrich-Alexander-Universität Erlangen-Nürnberg (FAU), 91058 Erlangen, Germany (e-mail: friedemann.laue@fau.de, robert.schober@fau.de).}%
\thanks{F. Laue is also with Fraunhofer IIS, Fraunhofer Institute for Integrated Circuits IIS, Division Communication Systems (e-mail: friedemann.laue@iis.fraunhofer.de).}%
\thanks{V. Jamali is with Department of Electrical Engineering and Information Technology, Technical University of Darmstadt, 64283 Darmstadt, Germany (e-mail: vahid.jamali@tu-darmstadt.de).}%
}

\maketitle

\begin{abstract}
  Efficient beam training is the key challenge in the codebook-based configuration of \glspl{ris} because the beam training overhead can have a strong impact on the achievable system performance.
  In this paper, we study the performance tradeoff between overhead and achievable \gls{snr} in \gls{ris} beam training while taking into account the size of the targeted coverage area, the \gls{ris} response time, and the delay for feedback transmissions.
  Thereby, we consider three common beam training strategies: \gls{fs}, \gls{hs}, and \gls{ts}.
  Our analysis shows that the codebook-based illumination of a given coverage area can be realized with wide- or narrow-beam designs, which result in two different scaling laws for the achievable \gls{snr}.
  Similarly, there are two regimes for the overhead, where the number of pilot symbols required for reliable beam training is dependent on and independent of the \gls{snr}, respectively.
  Based on these insights, we investigate the impact of the beam training overhead on the effective rate and provide an upper bound on the user velocity for which the overhead is negligible.
  Moreover, when the overhead is not negligible, we show that \gls{ts} beam training achieves higher effective rates than \gls{hs} and \gls{fs} beam training, while \gls{hs} beam training may or may not outperform \gls{fs} beam training, depending on the \gls{ris} response time, feedback delay, and codebook size.
  Finally, we present numerical simulation results that verify our theoretical analysis.
  In particular, our results confirm the existence of the proposed regimes, reveal that fast \glspl{ris} can lead to negligible overhead for \gls{fs} beam training, and show that large feedback delays can significantly reduce the performance for \gls{hs} beam training.
\end{abstract}

\begin{IEEEkeywords}
  Reconfigurable intelligent surface, codebook, beam training, mobility, overhead, tradeoff, SNR scaling law.
\end{IEEEkeywords}

\glsresetall

\section{Introduction}
\noindent \Glspl{ris} are a promising candidate technology for future wireless networks as they enable smartly configurable communication channels~\cite{renzo2020smartradioenvironments,gong2020smartwirelesscommunications}.
\Glspl{ris} are nearly passive and low-cost devices comprising a large number of unit cells, each imposing an individual phase shift on an incident electromagnetic wave while reflecting it.
By joint configuration of all unit-cell phase shifts, the characteristics of the \gls{ris}, e.g., the angle of reflection, can be controlled.
Hence, tunable \glspl{ris} create new opportunities for network design and are able to improve the system performance of next-generation wireless networks~\cite{wu2021intelligentreflectingsurface}.

In order to maximize the system performance of \gls{ris}-assisted communication networks, the \gls{ris} phase shifts need to be optimized according to the instantaneous channel conditions.
To this end, several frameworks for both phase-shift optimization and channel estimation have been proposed and recent surveys of such techniques can be found in~\cite{pan2022overviewsignalprocessing,zheng2022surveychannelestimation}.
However, the complexity of the corresponding algorithms proposed in the literature typically scales with the number of \gls{ris} unit cells, which may lead to a prohibitively high computational burden for practical deployments~\cite{an2022codebookbasedsolutions}.
In fact, it has been shown that thousands of \gls{ris} unit cells may be required to create a sufficiently strong reflected link for typical outdoor communication use cases~\cite{najafi2020physicsbasedmodeling,liu2022simulationfieldtrial}.
Therefore, low-complexity online algorithms for channel estimation and phase-shift optimization are key challenges for the deployment of large \glspl{ris}.

\subsection{Codebook-Based RIS Configuration}
\noindent One approach for reduction of the \gls{ris} configuration overhead is configuring the unit cells based on a predefined codebook, which is particularly suitable for \gls{los}-dominated channels and high carrier frequencies~\cite{an2022codebookbasedsolutions,pan2022overviewsignalprocessing,peng2022channelestimationris,zheng2022surveychannelestimation}.
The advantage of codebook-based \gls{ris} configuration is twofold.
First, the phase shifts for each codeword can be computed prior to deployment, which alleviates the potentially high complexity of online phase-shift design.
Second, explicit \gls{csi} is not required~\cite{an2022codebookbasedsolutions,wang2023hierarchicalcodebookbased}.
Instead, the codeword providing the largest signal power at the receiver is selected via beam training.
As a result, the overhead incurred by \gls{ris} configuration scales with the size of the codebook, which can be significantly smaller than the number of \gls{ris} unit cells~\cite{najafi2020physicsbasedmodeling,jamali2022lowzerooverhead}.
Moreover, for data transmission, the end-to-end channel resulting for a given codeword can be acquired adopting conventional channel estimation methods~\cite{an2022codebookbasedsolutions}.

\subsection{Performance Tradeoff of RIS Beam Training}
\noindent
Generally speaking, \gls{ris} beam training can be considered as a form of implicit \gls{csi} acquisition, and there exists a fundamental tradeoff between the overhead afforded for \gls{csi} acquisition and the achievable \gls{snr}~\cite{wang2023hierarchicalcodebookbased,an2022codebookbasedsolutions,zhang2022dualcodebookdesign,zhang2023rateoverheadtradeoff,zhang2022rateoverheadtradeoff}.
In other words, a larger amount of time spent for beam training results in a larger achievable \gls{snr} but, at the same time, reduces the time remaining for data transmission.
For codebook-based \gls{ris} configuration, this tradeoff depends on the number of considered codewords, the phase-shift design of each codeword, the applied beam training strategy, and the velocity of the mobile users.

Moreover, there are additional restrictions that need to be considered for analyzing the performance tradeoff of \gls{ris} beam training.
In particular, a communication system is usually subject to time constraints caused by hardware limitations and the employed communication protocol.
For example, a \gls{ris} may be deployed in a standardized system like 3GPP \gls{nr}, which prescribes a specific frame structure that inherently imposes a lower bound on the communication delay.
Although low-latency applications have been targeted in recent \gls{nr} releases, the user-plane and control-plane latencies are \qty{1}{\milli\second} and \qty{10}{\milli\second}, respectively; a reduction by a factor of ten is pursued for future releases~\cite{hamidisepehr20215gurllcevolution,ji2021severalkeytechnologies}.
Nevertheless, this restriction on the communication delay can have an impact on the overhead of \gls{ris} beam training because beam training requires one or multiple transmissions of feedback information.
Furthermore, hardware constraints can influence the achievable system performance because the adopted \gls{ris} technology and the \gls{ris} hardware design have an impact on the \gls{ris} response time, i.e., the time duration required for switching between two stable \gls{ris} configurations.
Table~\ref{tab:technology} summarizes typical response times for current \gls{ris} technologies.
For example, fast switching unit cells can be realized using PIN diodes, whereas liquid crystals increase the \gls{ris} response time by several orders of magnitude.
However, liquid crystal technology is a promising solution for large-scale \glspl{ris} because of the low manufacturing cost and low power consumption~\cite{jimenezsaez2023reconfigurableintelligentsurfaces}.
Thus, a comprehensive analysis of the performance of \gls{ris} beam training should not only consider the properties of the adopted codebook and the employed beam training strategies, but also has to take into account hardware limitations and delay constraints~\cite{wang2022beamtrainingalignment,an2022codebookbasedsolutions}.
\begin{table}
  \begin{center}
    \caption{Response Times for Current \gls{ris} Technologies~\cite{liu2022simulationfieldtrial}.}
    \label{tab:technology}
    \begin{tabular}{| p{6em} | p{7em} | p{13em} | }
      \hline
      Technology & Response Time & Comment\\
      \hline
      PIN diode & \(\sim \unit{\nano\second}\) & Control circuits may reduce response times, \qtyrange{0.33}{28}{\micro\second} reported in~\cite{kamoda201160ghzelectronically,yang20161bit10,li2021programmablemetasurfacebased,pan202110240element}.\\
      \hline
      MEMS & \(\sim \unit{\micro\second}\) & For example, \qty{3}{\micro\second} to \qty{20}{\milli\second} reported in~\cite{sharma2017designsimulationcomparative,sravani2019designperformanceanalysis,nooraiyeen2020designnovellow}.\\
      \hline
      Liquid crystal & \(\sim \unit{\milli\second}\) & \qty{10}{\milli\second} measured for beam steering between \(\pm\qty{60}{\degree}\) \cite{jakoby2020microwaveliquidcrystal}.\\
      \hline
    \end{tabular}
  \end{center}
\end{table}

\subsection{Related Work and Contributions}
\noindent
Recently, beam training and codebook-based \gls{ris} configuration have been studied in several works~\cite{najafi2020physicsbasedmodeling,jamali2021powerefficiencyoverhead,zhang2022rateoverheadtradeoff,jamali2022lowzerooverhead,ghanem2022optimizationbasedphasea,zhang2023rateoverheadtradeoff,zhang2022dualcodebookdesign,wang2023hierarchicalcodebookbased,lv2023risaidedfield,you2020fastbeamtraining,alexandropoulos2022fieldhierarchicalbeama,liu2023lowoverheadbeam,wang2022jointhybrid3d,huang2023twotimescalebased,wang2021jointbeamtraining,tian2021fastbeamtracking}.
Thereby, the main focus was on reducing the beam training overhead compared to a full search of the entire codebook.
A popular approach is to employ hierarchically structured codebooks and multi-lobe beam patterns, which have been studied for both near field and far field deployments~\cite{you2020fastbeamtraining,alexandropoulos2022fieldhierarchicalbeama,zhang2022dualcodebookdesign,wang2023hierarchicalcodebookbased,lv2023risaidedfield}.
Moreover, the beam training overhead was reduced by exploiting deep-learning techniques~\cite{liu2023lowoverheadbeam}, sensing at the \gls{ris}~\cite{wang2022jointhybrid3d}, two time scales of the \gls{ris}-assisted channel~\cite{huang2023twotimescalebased}, and estimation of the \gls{aoa} and \gls{aod} based on random beamforming~\cite{wang2021jointbeamtraining}.
Similarly, prediction and tracking of the user position was investigated in~\cite{tian2021fastbeamtracking}.

However, only few existing works investigated the performance tradeoff between the beam training overhead and the achievable system performance~\cite{zhang2022dualcodebookdesign,zhang2022rateoverheadtradeoff,an2022codebookbasedsolutions,zhang2023rateoverheadtradeoff,wang2023hierarchicalcodebookbased}.
For example, the authors of \cite{zhang2022dualcodebookdesign,zhang2023rateoverheadtradeoff} studied multi-lobe beam training for an intelligent omni-surface and showed that there exists an optimal codebook size maximizing the effective system throughput.
Similar results were obtained for a reconfigurable refractive surface~\cite{zhang2022rateoverheadtradeoff}.
Furthermore, the relation between overhead and the success rate of beam training was studied in~\cite{wang2023hierarchicalcodebookbased}, where non-ideal beam patterns were taken into account.

Nevertheless, some important aspects have not been investigated, yet.
For example, as shown in Table~\ref{tab:technology}, the considered \gls{ris} technology has a strong impact on the \gls{ris} response time, which is an essential parameter of the beam training overhead and has to be taken into account for the tradeoff analysis.
Similarly, the considered feedback delay can be relevant for the beam training overhead.
Moreover, the \gls{ris} codebook is usually designed for covering the entire angular space, but, in practice, a typical deployment aims to illuminate a specific area only~\cite{an2022codebookbasedsolutions}.
Thus, the codebook size can be reduced if the specifics of the target area are taken into account, which inherently reduces the beam training overhead.
However, the relation between codebook size and coverage area has not been considered in the existing studies on \gls{ris} beam training.
In addition, most existing works assume one pilot symbol per codeword during beam training, but, depending on the achievable \gls{ris} gain, additional pilot symbols may be required to select the optimal beam with high probability.

Motivated by the above discussion, this paper provides a comprehensive analysis of codebook-based \gls{ris} configuration and investigates the fundamental performance tradeoff between the beam training overhead and the achievable \gls{snr}.
The main contributions of this paper can be summarized as follows:

\begin{itemize}
  \item We show that codebook-based illumination of a target area by a \gls{ris} can be achieved with narrow- or wide-beam designs, which leads to two different fundamental regimes and scaling laws for the achievable \gls{snr}.
  In addition to the well-known quadratic scaling law for narrow-beam designs, we show that wide-beam designs result in a linear scaling of the \gls{snr} in both the \gls{ris} size and the codebook size.
  \item We show that there are also two regimes for the beam training overhead, characterized by the number of pilot symbols required for reliable beam training.
  In particular, our analysis reveals that the beam training overhead is either dependent on or independent of the \gls{snr}.
  \item We propose a general model for the overhead of multi-level beam training that accounts for the \gls{ris} response time, the feedback delay, and the velocity of a mobile user.
  Based on this model and the proposed regimes, we derive the overhead for reliable beam training for three common beam training strategies: \gls{fs}, \gls{hs}, and \gls{ts}.
  \item Assuming reliable beam training, we show that \gls{ts} beam training achieves higher effective rates than \gls{fs} and \gls{hs} beam training, while we reveal that \gls{hs} beam training may or may not outperform \gls{fs} beam training, depending on the \gls{ris} response time, feedback delay, and codebook size.
  In addition, we provide upper bounds for the mobile user's velocity, which guarantee that the beam training overhead has negligible impact on system performance.
  \item We present a comprehensive set of simulation results that corroborate our theoretical analysis.
  In particular, we verify the existence of the proposed regimes and validate the condition for negligible beam training overhead.
  Moreover, our simulation results reveal that fast \glspl{ris} facilitate \gls{fs} beam training, whereas large feedback delays can significantly reduce the performance for \gls{hs} beam training.
\end{itemize}

The remainder of this paper is organized as follows. Section~\ref{sec:system_model} describes the considered communication system.
The different regimes for the \gls{snr} and the overhead are presented in Section~\ref{sec:regimes}.
A detailed analysis of the considered beam training strategies and the overheads they cause is provided in Section~\ref{sec:beam-training-analysis}.
These results are adopted in Section~\ref{sec:performance-tradeoff} to study the performance tradeoff of \gls{ris} beam training.
Finally, numerical simulation results are presented in Section~\ref{sec:numerical_results}, and conclusions are drawn in Section~\ref{sec:conclusion}.

\textit{Notations}: Bold small and capital letters are used to denote vectors and matrices, respectively. The zero vector of size \(n\) and the identity matrix of size \(n\times n\) are represented by \(\vt{0}_n\) and \(\mt{I}_n\), respectively.
Moreover, \((\cdot)^H\) denotes the Hermitian operator, \(\E{\cdot}\) represents expectation, and \(\abs{\cdot}\) denotes the magnitude of a scalar or the cardinality of a set.
The smallest integer larger than \(x\) is represented by \(\ceil*{x}\).
Furthermore, a complex Gaussian distribution with mean vector \(\vt{m}\) and covariance matrix \(\mt{C}\) is denoted by \(\CN{\vt{m}, \mt{C}}\), and \(\mathcal{U}(a, b)\) denotes a uniform distribution in the interval \([a,b]\).

\section{System Model}\label{sec:system_model}
  \noindent
  As illustrated in Fig.~\ref{fig:system}, we consider a narrow-band \gls{miso} communication system comprising a \gls{bs}, a \gls{ris}, and a mobile user that travels within a given coverage area of size \(\gls{Area}\).
  Adopting a high carrier frequency, e.g., in the \gls{mmWave} band, we assume \gls{los}-dominated channels and a blocked direct link between the \gls{bs} and the user~\cite{wang2023hierarchicalcodebookbased}.
  Thus, the \gls{bs} transmits data to the user via the \gls{ris}, where we assume that both the \gls{bs} and the user are in the far-field of the \gls{ris}.
  The \gls{bs} comprises \(\gls{NumAp}\) antenna elements with element spacing \(d\) and antenna gain \(\gls{GainTransmitter}\), and the user is equipped with a single antenna with gain \(\gls{GainReceiver}\).
  Moreover, we assume that the \gls{ris} and the coverage area are located in the \(x\)-\(y\) and \(y\)-\(z\) planes, respectively.
  At the \gls{ris}, the elevation and azimuth \glspl{aoa} of the incident wave and \glspl{aod} of the reflected wave are denoted by \((\gls{AngleIncidentElevation}, \gls{AngleIncidentAzimuth})\) and \((\gls{AngleReflectedElevation},\gls{AngleReflectedAzimuth})\), respectively.
  Furthermore, we assume a square \gls{ris} comprising \(\gls{NumRis}\) unit cells of lengths \(\gls{LengthUnitCellX}\) and \(\gls{LengthUnitCellY}\) along the \(x\)- and \(y\)-directions, respectively.
  In order to align the \gls{ris} reflection beam with the location of the mobile user, the \gls{ris} is regularly reconfigured by means of beam training.
  Thereby, in general, we assume a predefined multi-level \gls{ris} codebook with \(\gls{NumLevels}\) levels\footnote{In Section~\ref{sec:beam-training-analysis}, we study \gls{fs} beam training and \gls{ts} beam training with \(\gls{NumLevels}=1\) as well as \gls{hs} beam training with \(\gls{NumLevels} > 1\).}.
  The codewords at the \(l\)th level, \(l \in \{1, \dotsc, \gls{NumLevels}\}\), are designed to illuminate a particular subarea of the coverage area, respectively, and the set of codewords at the \(l\)th level is denoted by \(\gls{SetCodewords}\).
  In addition, \(\gls{SetCodewordsTraining} \subseteq \gls{SetCodewords}\) denotes the subset of codewords that are actually used for beam training, which is specified by the considered beam training strategy, see Section~\ref{sec:beam_training_strategies}.
  Finally, the number of codewords in sets \(\gls{SetCodewords}\) and \(\gls{SetCodewordsTraining}\) are denoted by \(\abs*{\gls{SetCodewords}} = \gls{SizeCodebookAtLevel}\) and \(\abs*{\gls{SetCodewordsTraining}} = \gls{NumCodewordsTraining}\), respectively.

  \begin{figure}
    \centering
    \includegraphics{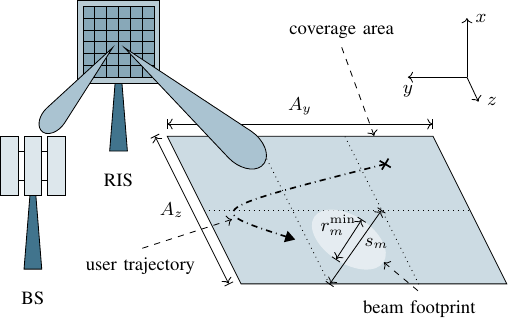}
    \caption{\Glsxtrshort{ris}-assisted \glsxtrshort{miso} system with a mobile user.}
    \label{fig:system}
  \end{figure}

  \subsection{Transmission and Beam Training Protocol}\label{sec:transmission_protocol}
  \noindent
  We consider a frame-based transmission protocol and assume that the frame duration \(\gls{DurationFrame}\) corresponds to the beam coherence time, i.e., the time duration for which the mobile user is covered by the beam of a given codeword.
  Thus, beam training is required once per frame.
  In this work, we assume downlink beam training, where the \gls{bs} transmits pilot symbols to the user, while the \gls{ris} changes\footnote{In practice, beam training requires synchronization between user, \gls{bs}, and \gls{ris}, which may increase the overall overhead.
  However, we neglect the impact of synchronization in this work because synchronization errors below \qty{500}{\nano\second} can be realized with recent releases of 3GPP \gls{nr}~\cite{romanov2021precisesynchronizationmethod,hamidisepehr20215gurllcevolution}, which is smaller than the symbol duration of the considered narrow-band system.} its phase-shift configuration based on the predefined codebook.
  
  \begin{figure}
    \centerline{
      \includegraphics{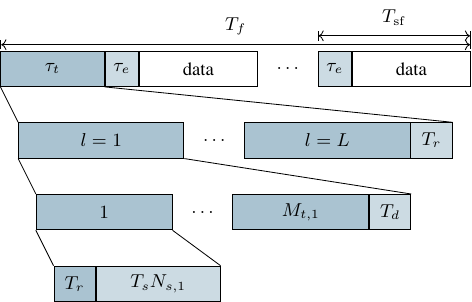}
    }
    \caption{Illustration of one frame of the considered transmission protocol.}
    \label{fig:protocol}
  \end{figure}
  As illustrated in Fig.~\ref{fig:protocol}, the first segment of a frame is determined by the total beam training overhead \(\gls{OverheadTraining}\) resulting from beam training with \(\gls{NumLevels}\) codebook levels.
  In particular, at the \(l\)th level, each codeword in \(\gls{SetCodewordsTraining}\) causes an overhead of \(\gls{TimeResponse} + \gls{DurationSymbolTraining} \gls{NumSymTxLevel}\), where \(\gls{TimeResponse}\), \(\gls{DurationSymbolTraining}\), and \(\gls{NumSymTxLevel}\) denote the \gls{ris} response time (i.e., the time needed by the \gls{ris} to change from one phase-shift configuration to another, see Table~\ref{tab:technology}), the pilot symbol duration, and the number of pilot symbols, respectively.
  In addition, at each codebook level, the user determines the codeword that provides the largest received signal power and transmits the corresponding codeword index \(\gls{CodewordBestAtLevel}\) over a dedicated feedback channel to the \gls{ris}, which introduces a delay of \(\gls{DelayFeedback}\).
  Then, assuming zero processing delay\footnote{We only focus on the systematic overhead of the considered beam training protocol and neglect possible processing overheads. In practice, \(\gls{CodewordSetAtLevelNext}\) may be found based on \(\gls{CodewordBestAtLevel}\) using a simple look-up table.}, \(\gls{CodewordBestAtLevel}\) is used to determine the subset of codewords for the training at level \(l+1\).
  Finally, beam training is concluded by configuring codeword \(\gls{CodewordBest}\) for subsequent data transmission, which adds a further overhead of \(\gls{TimeResponse}\).
  Thus, the total beam training overhead is given by
  \begin{equation}\label{eq:overhead_training_general}
    \gls{OverheadTraining} = \gls{TimeResponse} + \gls{NumLevels}\gls{DelayFeedback} + \sum_{l=1}^{\gls{NumLevels}} \left(\gls{TimeResponse} + \gls{DurationSymbolTraining} \gls{NumSymTxLevel}\right) \gls{NumCodewordsTraining}.
  \end{equation}

  After beam training, the \gls{ris} configuration is fixed to estimate the end-to-end channel for data transmission.
  To this end, we adopt the channel coherence time in~\cite[Eq. (4.40.c)]{rappaport2001wirelesscommunications} to divide the remaining time \(\gls{DurationFrame} - \gls{OverheadTraining}\) into subframes of duration \(\gls{DurationSubframe} = \sqrt{\frac{9}{16\pi}} \frac{\gls{Wavelength}}{\gls{Velocity}}\), where \(\gls{Wavelength}\) and \(\gls{Velocity}\) denote the wavelength and the mobile user's velocity, respectively.
  Thus, assuming a channel estimation overhead of \(\gls{OverheadEstimation}\) in each subframe, the overall overhead for beam training and channel estimation normalized by the frame duration is given by
  \begin{equation}\label{eq:overall-overhead}
    \gls{OverheadRelative} =
    \frac{\gls{OverheadTraining} + \frac{\gls{DurationFrame}-\gls{OverheadTraining}}{\gls{DurationSubframe}}\gls{OverheadEstimation}}{\gls{DurationFrame}}
    = \frac{\gls{OverheadTraining}}{\gls{DurationFrame}} \left(1 - \gls{OverheadEstimationNormalized}\right) + \gls{OverheadEstimationNormalized},
  \end{equation}
  where \(\gls{OverheadEstimationNormalized} = \gls{OverheadEstimation}/\gls{DurationSubframe} = \gls{OverheadEstimation} 4 \sqrt{\pi} \gls{Velocity}/(3 \gls{Wavelength})\) denotes the normalized estimation overhead, and we assume that \(\frac{\gls{DurationFrame}-\gls{OverheadTraining}}{\gls{DurationSubframe}}\) is an integer.

  \subsection{Signal Model}
  \noindent
  During beam training, the received signal for the \(m\)th codeword is given by
  \begin{equation}\label{eq:signal-model}
    \vt{y}_m = \sqrt{\gls{GainReceiver} \gls{GainTransmitter} \gls{PowerTransmitter}} \glsConjTrans{ChannelEndToEndCodeword} \gls{PrecodingAp} \gls{TransmitSymbolVector} + \vt{n},
  \end{equation}
  where \(\gls{TransmitSymbolVector} \in \mathbb{C}^{\gls{NumSymTxLevel}}\), \(\gls{ChannelEndToEnd}_m \in \mathbb{C}^{\gls{NumAp}}\),
  \(\gls{PrecodingAp} \in \mathbb{C}^{\gls{NumAp}}\), \(\gls{PowerTransmitter}\), and \(\vt{n} \sim\CN{\vt{0}_{\gls{NumSymTxLevel}}, \gls{NoisePower} \mt{I}_{\gls{NumSymTxLevel}}}\) denote the transmit vector comprising \(\gls{NumSymTxLevel}\) pilot symbols of unit power, the end-to-end \gls{bs}-\gls{ris}-user channel for the \(m\)th codeword, the precoding vector at the \gls{bs}, the transmit power, and zero-mean Gaussian noise with power \(\gls{NoisePower}\), respectively.
  The precoding vector \(\gls{PrecodingAp}\) is chosen to align the main lobe of the \gls{bs} with the \gls{los} path to the \gls{ris}~\cite{wang2023hierarchicalcodebookbased}, and the end-to-end channel \(\gls{ChannelEndToEndCodeword}\) is given by\footnote{Since the direct \gls{bs}-user link is assumed to be blocked, its contribution to \(\gls{ChannelEndToEndCodeword}\) is neglected. Thus, \eqref{eq:channel-end-to-end} only comprises the \gls{bs}-\gls{ris}-user link.}
  \begin{equation}\label{eq:channel-end-to-end}
    \glsConjTrans{ChannelEndToEndCodeword} = \vt{h}_r^H \mt{G}_m \mt{H}_i,
  \end{equation}
  where \(\mt{H}_i \in \mathbb{C}^{\gls{NumRis}\times \gls{NumAp}}\),
  \(\mt{G}_m \in \mathbb{C}^{\gls{NumRis}\times \gls{NumRis}}\),
  and \(\vt{h}_r \in \mathbb{C}^{\gls{NumRis}}\) denote the \gls{bs}-\gls{ris} channel matrix, the reflection matrix of the \gls{ris}, and the \gls{ris}-user channel vector, respectively.
  For \(\mt{H}_i\) and \(\vt{h}_r\), we assume a statistical channel model\footnote{For example, we adopt a geometric Rician fading channel model for our numerical simulations, see Section~\ref{sec:numerical_results}.} with a dominant \gls{los} path.
  Moreover, the \gls{ris} reflection matrix is given by
  \(
    \mt{G}_m = \diag*{
    \begin{bmatrix}
      \gls{FactorUnitCell} e^{j\gls{PhaseUc}_{m,1}} & \gls{FactorUnitCell} e^{j\gls{PhaseUc}_{m,2}} & \dots & \gls{FactorUnitCell} e^{j\gls{PhaseUc}_{m,\gls{NumRis}}}
      \end{bmatrix}
    }
  \),
  where \(\gls{FactorUnitCell}\) and $\psi_{m,q}$ denote the unit-cell factor and the phase shift of the \(q\)th unit cell, respectively.
  The unit cell factor is given by \(\gls{FactorUnitCell} = \frac{\mathsf{j} 4\pi \gls{LengthUnitCellX}\gls{LengthUnitCellY}}{\glsSquared{Wavelength}}\gls{FactorUnitCellVariation}\), where \(\gls{FactorUnitCellVariation}\) accounts for the \gls{ris} \glspl{aoa} and \glspl{aod} and the polarization of the incident wave, see~\cite{najafi2020physicsbasedmodeling} for details.
  
  \subsection{System Performance Metrics}
  \noindent
  For data transmission, it is desired that the selected codeword maximizes the end-to-end channel gain \(\abs*{\sqrt{\gls{GainReceiver} \gls{GainTransmitter} \gls{PowerTransmitter}} \glsConjTrans{ChannelEndToEndCodeword} \gls{PrecodingAp}}^2\) in \eqref{eq:signal-model}, which is achieved by matched-filtering the received training signal with the known pilot sequence.
  Hence, the codeword selection at the \(l\)th codebook level is given by
  \begin{equation}\label{eq:selected-codeword}
    \gls{CodewordBestAtLevel} = \argmax_{m\in\gls{SetCodewordsTraining}} \abs*{\glsConjTrans{TransmitSymbolVector} \vt{y}_m}^2.
  \end{equation}
  Based on the filtered training signal, we define the training \gls{snr} as follows
  \begin{equation}\label{eq:snr-training}
    \gls{SnrReceivedTrainingAtLevel} = \frac{\gls{GainReceiver} \gls{GainTransmitter} \gls{PowerTransmitter} \gls{NumSymTxLevel} \abs*{\glsConjTrans{ChannelEndToEndCodeword} \gls{PrecodingAp}}^2}{\gls{NoisePower}} = \gls{NumSymTxLevel}\gls{SnrReceivedNormalized},
  \end{equation}
  where \(\gls{SnrReceivedNormalized} = \gls{GainReceiver} \gls{GainTransmitter} \gls{PowerTransmitter} \abs*{\glsConjTrans{ChannelEndToEndCodeword} \gls{PrecodingAp}}^2 / \gls{NoisePower}\) denotes the achievable \gls{snr} per symbol for the \(m\)th codeword at the \(l\)th codebook level.

  Moreover, we aim to analyze the system performance by evaluating the tradeoff between the achievable \gls{snr} for data transmission and the overall beam training overhead.
  The latter is given by \(\gls{OverheadRelative}\) in~\eqref{eq:overall-overhead}.
  The former, assuming a unit-power data symbol, is given by
  \begin{equation}\label{eq:snr-data}
    \gls{SnrReceivedData} = \gls{SnrReceivedNormalizedBest} =
    \frac{
      \gls{GainReceiver} \gls{GainTransmitter} \gls{PowerTransmitter}
      \abs*{\glsConjTrans{ChannelEndToEndBestCodeword} \gls{PrecodingAp}}^2
      }{
        \gls{NoisePower}
      }.
  \end{equation}
  Thus, the performance tradeoff can be analyzed based on the effective ergodic rate
  \begin{equation}\label{eq:rate_effective}
    \gls{RateEffective}
    = \left(1 - \gls{OverheadRelative}\right)
      \E*{\log_2\left(1 + \gls{SnrReceivedData}\right)}.
  \end{equation}
  In the following sections, we analyze \(\gls{SnrReceivedData}\) and \(\gls{OverheadRelative}\) in detail to reveal their impact on \eqref{eq:rate_effective}.
  
  \begin{remark}\label{rm:impact_on_decisions}
    The probability of selecting the best codeword during beam training depends on several parameters.
    For example, receiver noise and channel fading have an impact on the instantaneous signal power \(\abs*{\glsConjTrans{TransmitSymbolVector} \vt{y}_m}^2\) in \eqref{eq:selected-codeword}, which may lead to a suboptimal codeword selection.
    Similarly, beam training is influenced by the codebook design because it affects the beam patterns and thus the received power for a particular codeword~\cite{wang2023hierarchicalcodebookbased}.
    Moreover, erroneous codeword selection in multi-level beam training propagates from one level to the next, which may result in beam misalignment~\cite{wang2022beamtrainingalignment}.
    Since reliable beam training is desired, we propose two regimes for the beam training overhead in Section~\ref{sec:regimes-overhead}, which guarantee that the training \gls{snr} in \eqref{eq:snr-training} is sufficiently large to ensure selection of the best beam possible.
    Nevertheless, the effect of erroneous codeword selection is included in our numerical results in Section~\ref{sec:numerical_results}.
  \end{remark}

\section{Regimes for SNR and Overhead}\label{sec:regimes}
\noindent This section introduces fundamental regimes for both the achievable \gls{snr} and the beam training overhead, which form the basis for the beam training analysis in Section~\ref{sec:beam-training-analysis} and the tradeoff analysis in Section~\ref{sec:performance-tradeoff}.

\subsection{Achievable SNR}\label{sec:achievable_snr}
\noindent
The achievable \gls{snr} for codebook-based \gls{ris} configuration can be characterized by two fundamental regimes, which depend, among other parameters, on the size of the considered codebook~\cite{laue2022performancetradeoffris}.
In the following, we provide a general definition of these \gls{snr} regimes, which takes the beam width and the size of the coverage area into account\footnote{We only focus on the reflected beam from the \gls{ris} to the user because the incident beam is aligned with the \gls{los} path to the \gls{bs}.}.
To this end, without loss of generality, we first focus on the \(m\)th codeword and the \(m\)th subarea for the \(l\)th codebook level, and introduce the following definitions of narrow beams and wide beams.
For a narrow beam, the phase shifts for the \(m\)th codeword are designed to maximize the received power for particular values of \((\gls{AngleReflectedElevation}, \gls{AngleReflectedAzimuth})\), e.g., for the \glspl{aod} from the \gls{ris} to the center of the \(m\)th subarea.
Such a design yields the minimum possible beam width and results in the smallest diameter of the beam footprint, denoted by \(\gls{PathThroughFootprint}\), see Fig.~\ref{fig:system}.
For a wide beam, in contrast, the phase shifts are designed for non-degenerate intervals of \((\gls{AngleReflectedElevation}, \gls{AngleReflectedAzimuth})\), which leads to beam footprints wider than \(\gls{PathThroughFootprint}\).

\begin{proposition}\label{prop:snr-scaling-law}
  If the \(m\)th codeword at the \(l\)th codebook level is designed to fully illuminate the \(m\)th subarea, the achievable \gls{snr} \(\gls{SnrReceivedNormalized}\) scales as follows
  \begin{equation}\label{eq:snr-scaling-law}
    \gls{SnrReceivedNormalized} \propto
    \begin{cases}
      \gls{NumRis} \gls{SizeCodebookAtLevel}
      & \gls{PathThroughFootprint} \ll \gls{PathThroughSubarea}\\
      \glsSquared{NumRis}
      & \gls{PathThroughFootprint} \gg \gls{PathThroughSubarea},
    \end{cases}
  \end{equation}
  where \(\gls{PathThroughSubarea}\) denotes the diameter of the \(m\)th subarea.
\end{proposition}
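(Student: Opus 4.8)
The plan is to derive both laws from a single power-conservation argument applied to the reflected beam, handling the narrow- and wide-beam designs separately. Throughout, I would work with $\gls{SnrReceivedNormalized} \propto \abs*{\glsConjTrans{ChannelEndToEndCodeword}\gls{PrecodingAp}}^2$ from~\eqref{eq:snr-training}, interpreting this quantity as the power density of the field reflected by the \gls{ris} at the user's location, and I would use that the \emph{total} power reflected by a lossless \gls{ris} is independent of the phase configuration and scales as $\gls{NumRis}$, since it is proportional to the \gls{ris} aperture.

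For the narrow-beam regime $\gls{PathThroughFootprint} \gg \gls{PathThroughSubarea}$, even the minimum-width beam over-illuminates the subarea, so the $m$th codeword may be taken to steer all $\gls{NumRis}$ unit cells coherently toward the subarea center. Writing $\glsConjTrans{ChannelEndToEndCodeword}\gls{PrecodingAp} = \vt{h}_r^H \mt{G}_m \mt{H}_i \gls{PrecodingAp}$ and choosing the phase shifts to co-phase the per-unit-cell contributions, the magnitude of this sum grows linearly in $\gls{NumRis}$; squaring recovers the standard quadratic beamforming gain $\gls{SnrReceivedNormalized} \propto \glsSquared{NumRis}$.

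The wide-beam regime $\gls{PathThroughFootprint} \ll \gls{PathThroughSubarea}$ is the crux. Here the narrowest beam is too thin to cover the subarea, so the codeword must spread the reflected energy over a footprint of diameter $\gls{PathThroughSubarea}$. For a beam that illuminates this footprint roughly uniformly, the power density at the user equals the total reflected power divided by the footprint area, giving $\gls{SnrReceivedNormalized} \propto \gls{NumRis}/\gls{PathThroughSubarea}^2$. It then remains to relate $\gls{PathThroughSubarea}$ to $\gls{SizeCodebookAtLevel}$: since the $\gls{SizeCodebookAtLevel}$ codewords at level $l$ tile the fixed coverage area $\gls{Area}$, each subarea has area $\propto \gls{Area}/\gls{SizeCodebookAtLevel}$, whence $\gls{PathThroughSubarea}^2 \propto 1/\gls{SizeCodebookAtLevel}$. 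Substituting yields $\gls{SnrReceivedNormalized} \propto \gls{NumRis}\gls{SizeCodebookAtLevel}$, as claimed.

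The main obstacle I anticipate is justifying the uniform-illumination step: one must show that an admissible wide-beam codeword can distribute the reflected power over the target footprint without concentrating it, so that the peak density scales as $\gls{NumRis}/\gls{PathThroughSubarea}^2$ rather than faster. I would close this gap by synthesizing the wide beam for a non-degenerate angular interval $(\gls{AngleReflectedElevation},\gls{AngleReflectedAzimuth})$ within the physics-based \gls{ris} model of~\cite{najafi2020physicsbasedmodeling}, and by invoking the far-field diffraction relation $\gls{PathThroughFootprint}^2 \propto 1/\gls{NumRis}$ (beam width $\propto \gls{Wavelength}/\sqrt{\gls{NumRis}}$ for a square aperture). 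As a consistency check, at the crossover $\gls{PathThroughFootprint}\approx\gls{PathThroughSubarea}$ these two relations force $\gls{SizeCodebookAtLevel}\approx\gls{NumRis}$, so the linear law $\gls{NumRis}\gls{SizeCodebookAtLevel}$ and the quadratic law $\glsSquared{NumRis}$ coincide, confirming that the two regimes match at their boundary.
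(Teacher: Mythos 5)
Your proposal is correct and takes essentially the same route as the paper: the narrow-beam law is the standard coherent beamforming gain (amplitude \(\propto \gls{NumRis}\), power \(\propto \glsSquared{NumRis}\)), and the wide-beam law follows from the same power-conservation argument the paper uses—power collected at the \gls{ris} scales as \(\gls{NumRis}\) and an idealized wide beam redistributes it uniformly over a subarea whose area is \(\propto \gls{Area}/\gls{SizeCodebookAtLevel}\), giving \(\gls{SnrReceivedNormalized} \propto \gls{NumRis}\gls{SizeCodebookAtLevel}\), which is precisely the paper's ``idealized wide-beam design'' in Appendix~\ref{sec:proof-snr-scaling-wbr}. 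Your uniform-illumination caveat coincides with the idealization the paper explicitly assumes (since tractable analytical wide-beam gains are unavailable), and your crossover consistency check at \(\gls{PathThroughFootprint}\approx\gls{PathThroughSubarea}\) is a nice bonus that mirrors the paper's remark that the two regimes bound the transition region.
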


\begin{proof}
  If \(\gls{PathThroughFootprint} \ll \gls{PathThroughSubarea}\) holds, the footprint of a narrow beam is not wide enough to fully illuminate the targeted subarea, i.e., a wide-beam design is required to obtain the desired coverage.
  In this work, we adopt an idealized wide-beam design that perfectly reflects the power collected at the \gls{ris} to the targeted subarea, because the \gls{ris} gain for existing wide-beam designs is usually not in tractable analytical form~\cite{laue2021irsassistedactive,jamali2021powerefficiencyoverhead}.
  For the idealized wide-beam design, it is shown in Appendix~\ref{sec:proof-snr-scaling-wbr} that \(\gls{SnrReceivedNormalized} \propto \gls{NumRis} \gls{SizeCodebookAtLevel}\).
  In contrast, if \(\gls{PathThroughFootprint} \gg \gls{PathThroughSubarea}\) holds, a narrow beam is sufficient to fully illuminate the targeted subarea, which results in \(\gls{SnrReceivedNormalized} \propto \glsSquared{NumRis}\), see Appendix~\ref{sec:proof-snr-scaling-nbr}.
\end{proof}

Based on Proposition~\ref{prop:snr-scaling-law}, we define the following regimes for the achievable \gls{snr} for codebook-based \gls{ris} configuration.
The system operates in the \gls{wbr} if \(\gls{PathThroughFootprint} \ll \gls{PathThroughSubarea}\) holds for all codewords \(m \in \gls{SetCodewords}\) at the \(l\)th codebook level.
Then, wide beams are required to fully illuminate the subareas of the coverage area and \(\gls{SnrReceivedNormalized} \propto \gls{NumRis} \gls{SizeCodebookAtLevel}\).
In contrast, the system operates in the \gls{nbr} if \(\gls{PathThroughFootprint} \gg \gls{PathThroughSubarea}\) holds for all codewords \(m \in \gls{SetCodewords}\) at the \(l\)th codebook level.
In this case, narrow beams are sufficient to fully illuminate the subareas of the coverage area and \(\gls{SnrReceivedNormalized} \propto \glsSquared{NumRis}\).
We note that the transition between the \gls{wbr} and the \gls{nbr} is gradual and continuous.
Although this transition regime is not captured in Proposition~\ref{prop:snr-scaling-law}, the \glspl{snr} in the \gls{wbr} and the \gls{nbr} can be considered as lower and upper bounds for the transition regime, respectively.
Therefore, this paper focuses on the \gls{wbr} and \gls{nbr}, and leaves a detailed analysis of the transition regime for future work.

In order to obtain more insight on the \gls{snr} regimes, consider the following assumptions on the coverage area and the codebook.
We assume a rectangular coverage area with center coordinates \((x_A, y_A, z_A)\) and size \(\gls{Area} = \gls{AreaY} \gls{AreaZ}\), where \(\gls{AreaY}\) and \(\gls{AreaZ}\) denote the side lengths in the \(y\)- and \(z\)-directions, respectively.
The coverage area is illuminated by \(\gls{SizeCodebookAtLevel} = \gls{SizeCodebookAtLevelX} \gls{SizeCodebookAtLevelY}\) codewords, i.e., the codebook at the \(l\)th level comprises \(\gls{SizeCodebookAtLevelX}\) and \(\gls{SizeCodebookAtLevelY}\) different phase-shift profiles along the \(x\)- and \(y\)-directions of the \gls{ris}, respectively, indexed by \(m_x\in\{1, \dotsc, \gls{SizeCodebookAtLevelX}\}\) and \(m_y\in\{1, \dotsc, \gls{SizeCodebookAtLevelY}\}\).
This corresponds to partitioning the coverage area into \(\gls{SizeCodebookAtLevel}\) subareas, each of size \((\gls{AreaY}/\gls{SizeCodebookAtLevelY})(\gls{AreaZ}/\gls{SizeCodebookAtLevelX})\).
Furthermore, assume that the beam of the \(m\)th codeword is directed to the center of the \((m_x,m_y)\)th subarea, which is located at distance
\(
  \glsSquared{DistanceReflectedForCodebook}
  = x_A^2
  + \left(y_A + \frac{\gls{AreaY}}{2\gls{SizeCodebookAtLevelY}} \left(2m_y - 1 - \gls{SizeCodebookAtLevelY}\right)\right)^2
  + \left(z_A + \frac{\gls{AreaZ}}{2\gls{SizeCodebookAtLevelX}} \left(2m_x - 1 - \gls{SizeCodebookAtLevelX}\right)\right)^2
\)
from the \gls{ris}.
Then, the minimum diameter of the beam footprint is given by \(\gls{PathThroughFootprint} = 2 \gls{DistanceReflectedForCodebook} \tan(\gls{AngularWidthBeam}/2)\), where \(\gls{AngularWidthBeam}\) denotes the minimum angular beam width of a narrow beam.
The latter, assuming half-wavelength sized unit cells and the \qty{3}{\decibel} beam width in broadside direction, is given by \(\gls{AngularWidthBeam} = |\frac{\pi}{2} - \arccos(\frac{2.782}{\pi \sqrt{\gls{NumRis}}})| + |\frac{\pi}{2} - \arccos(\frac{-2.782}{\pi \sqrt{\gls{NumRis}}})|\)~\cite[Eq.~(27)]{han2021halfpowerbeamwidth}.
Consequently, we obtain
\begin{equation}\label{eq:width-footprint}
  \gls{PathThroughFootprint} = 2 \gls{DistanceReflectedForCodebook} \tan(\gls{AngularWidthBeam}/2) \approx 1.77 \frac{\gls{DistanceReflectedForCodebook}}{\sqrt{\gls{NumRis}}},
\end{equation}
where the approximation is based on \(\tan(x) \approx x\) and \(\arccos(x) \approx \frac{\pi}{2} - x\) for \(\abs*{x} < 0.5\), which can be assumed since usually \(\gls{NumRis} > 13 > 16(2.782/\pi)^2\).
As a result, the \gls{wbr} and \gls{nbr} can be characterized by \(\gls{NumRis}\), \(\gls{Area}\), and \(\gls{SizeCodebookAtLevel}\) as follows.
\begin{lemma}\label{lm:snr-regimes}
  The system operates in the \gls{wbr} if
  \begin{equation}\label{eq:condition_wbr}
    \gls{NumRis} \gg
    \frac{
      \left(1.77 \max_{m \in \gls{SetCodewords}}
      \gls{DistanceReflectedForCodebook}\right)^2
    }{
      \left(\gls{AreaY}/\gls{SizeCodebookAtLevelY}\right)^2
      + \left(\gls{AreaZ}/\gls{SizeCodebookAtLevelX}\right)^2
    }.
  \end{equation}
  On the contrary, the system operates in the \gls{nbr} if
  \begin{equation}\label{eq:condition_nbr}
    \gls{NumRis} \ll
    \frac{
      \left(1.77 \min_{m \in \gls{SetCodewords}}
      \gls{DistanceReflectedForCodebook}\right)^2
    }{
      \left(\gls{AreaY}/\gls{SizeCodebookAtLevelY}\right)^2
      + \left(\gls{AreaZ}/\gls{SizeCodebookAtLevelX}\right)^2
    }.
  \end{equation}
\end{lemma}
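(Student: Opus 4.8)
The plan is to translate the defining inequalities of the two regimes — namely $\gls{PathThroughFootprint} \ll \gls{PathThroughSubarea}$ for the \gls{wbr} and $\gls{PathThroughFootprint} \gg \gls{PathThroughSubarea}$ for the \gls{nbr}, each required to hold for \emph{every} codeword $m \in \gls{SetCodewords}$ — directly into explicit bounds on $\gls{NumRis}$ by substituting the closed-form expressions for the two diameters already available above. First I would compute the subarea diameter: under the assumed rectangular partitioning, each subarea has side lengths $\gls{AreaY}/\gls{SizeCodebookAtLevelY}$ and $\gls{AreaZ}/\gls{SizeCodebookAtLevelX}$, so its diameter (the diagonal) is $\gls{PathThroughSubarea} = \sqrt{(\gls{AreaY}/\gls{SizeCodebookAtLevelY})^2 + (\gls{AreaZ}/\gls{SizeCodebookAtLevelX})^2}$, a value that is identical for all subareas. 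For the footprint diameter I would invoke the approximation \eqref{eq:width-footprint}, $\gls{PathThroughFootprint} \approx 1.77\, \gls{DistanceReflectedForCodebook}/\sqrt{\gls{NumRis}}$, which holds under the stated condition $\gls{NumRis} > 16(2.782/\pi)^2$.

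The key observation is how the universal quantifier over $m$ collapses to a single extremal constraint. Since $\gls{PathThroughSubarea}$ is codeword-independent while $\gls{PathThroughFootprint}$ is strictly increasing in the \gls{ris}-to-subarea distance $\gls{DistanceReflectedForCodebook}$, the \gls{wbr} requirement ``$\gls{PathThroughFootprint} \ll \gls{PathThroughSubarea}$ for all $m$'' is most restrictive for the codeword with the largest footprint, i.e., the one maximizing $\gls{DistanceReflectedForCodebook}$. Hence enforcing it for all $m$ is equivalent to
\[
  1.77\, \frac{\max_{m \in \gls{SetCodewords}} \gls{DistanceReflectedForCodebook}}{\sqrt{\gls{NumRis}}} \ll \sqrt{(\gls{AreaY}/\gls{SizeCodebookAtLevelY})^2 + (\gls{AreaZ}/\gls{SizeCodebookAtLevelX})^2}.
\]
Squaring both sides, which are positive, and solving for $\gls{NumRis}$ yields \eqref{eq:condition_wbr}. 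By the same monotonicity argument, the \gls{nbr} requirement ``$\gls{PathThroughFootprint} \gg \gls{PathThroughSubarea}$ for all $m$'' is most restrictive for the codeword \emph{minimizing} $\gls{DistanceReflectedForCodebook}$; substituting, squaring, and rearranging then gives \eqref{eq:condition_nbr}.

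The only mildly subtle step is justifying the reduction of the ``for all $m$'' quantifier to the $\max$ (respectively $\min$) over $\gls{SetCodewords}$, which I expect to be the main point to articulate carefully; everything else is substitution and algebraic rearrangement. This reduction rests on two facts that should be stated explicitly, namely that the subarea diameter is identical across codewords and that $\gls{PathThroughFootprint}$ is monotonically increasing in $\gls{DistanceReflectedForCodebook}$. Given these, the binding codeword for each regime is determined solely by the extreme \gls{ris}-to-subarea distance, so the regime condition is equivalent to enforcing the corresponding inequality at that extreme value, completing the argument.
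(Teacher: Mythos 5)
Your proposal is correct and follows essentially the same route as the paper's own proof: both reduce the per-codeword conditions \(\gls{PathThroughFootprint} \ll \gls{PathThroughSubarea}\) (respectively \(\gg\)) to a single extremal condition via \(\max_m\) and \(\min_m\), then substitute \eqref{eq:width-footprint} and the common subarea diagonal \eqref{eq:diameter} and solve for \(\gls{NumRis}\). The only difference is presentational: you spell out the monotonicity justification for collapsing the universal quantifier to the extremal distance, which the paper leaves implicit.
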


\begin{proof}
  Since all subareas have equal size, we have
  \begin{equation}\label{eq:diameter}
    \gls{PathThroughSubarea} = \sqrt{(\gls{AreaY}/\gls{SizeCodebookAtLevelY})^2 + (\gls{AreaZ}/\gls{SizeCodebookAtLevelX})^2}
  \end{equation}
  for all codewords.
  Thus, according to the above definition of the \gls{snr} regimes, the system operates in the \gls{wbr} and \gls{nbr} for \(\max_{m \in \gls{SetCodewords}} \gls{PathThroughFootprint} \ll \gls{PathThroughSubarea}\) and \(\min_{m \in \gls{SetCodewords}} \gls{PathThroughFootprint} \gg \gls{PathThroughSubarea}\), respectively, which can be rewritten as \eqref{eq:condition_wbr} and \eqref{eq:condition_nbr} using \eqref{eq:width-footprint} and \eqref{eq:diameter}.
\end{proof}

The results of Lemma~\ref{lm:snr-regimes} are visualized in Fig.~\ref{fig:footprints_subareas_regimes}, which shows the impact of the codebook size and the \gls{ris} size on the \gls{snr} regimes.
For visualization purpose, we adopt the \glspl{rhs} of \eqref{eq:condition_wbr} and \eqref{eq:condition_nbr} to separate the \gls{wbr} and \gls{nbr}, although there is a continuous transition in practice.
Nevertheless, Fig.~\ref{fig:footprints_subareas_regimes} suggests that the \gls{wbr} is particularly relevant for very large \glspl{ris} or small codebooks, e.g., at the lower levels of a hierarchical codebook.
Moreover, for \(\gls{NumLevels} > 1\), the \gls{snr} regime may change during beam training, which is indicated by the markers in Fig.~\ref{fig:footprints_subareas_regimes} assuming \(\gls{NumRis} = 3600\) and \(\gls{NumLevels} = 5\).
In particular, one can see that for the beam training in levels \(l \leq 3\) the \gls{wbr} is relevant, whereas for \(l = 4\) and \(l = 5\) the transition regime and the \gls{nbr} apply, respectively.
This observation is confirmed by our numerical results in Section~\ref{sec:numerical_results_snr_regimes}, where Fig.~\ref{fig:snr_verification} shows that the \gls{snr} changes from linear scaling (\gls{wbr}) to saturation (\gls{nbr}) as the codebook size grows.
\begin{figure}[!t]
  \centering
  \includegraphics{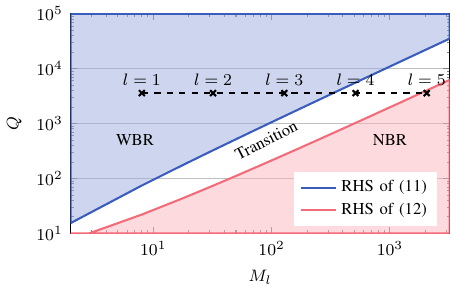}
  \caption{\Gls{snr} regimes \gls{wbr} and \gls{nbr} as functions of \(\gls{SizeCodebookAtLevel}\) and \(\gls{NumRis}\), visualized based on the \glsxtrlongpl{rhs} of \eqref{eq:condition_wbr} and \eqref{eq:condition_nbr} for \((x_A, y_A, z_A) = (\qty{-10}{\metre}, \qty{-20}{\metre}, \qty{100}{\metre})\), and \(\gls{Area} = \qtyproduct{100 x 50}{\metre}\).}
  \label{fig:footprints_subareas_regimes}
\end{figure}

\subsection{Beam Training Overhead}\label{sec:regimes-overhead}
\noindent Next, we show that the beam training overhead is also characterized by two fundamental regimes, in which \(\gls{OverheadTraining}\) is dependent on and independent of the \gls{snr}, respectively.
To this end, recall that the beam training is influenced by receiver noise and channel fading, which may affect the codeword selection in \eqref{eq:selected-codeword}.
Thus, in order to ensure that the selected codeword is correctly aligned with the current user location with high probability, we require that the optimal codeword at each level achieves a minimum \gls{snr} at the user.
As a result, we assume that reliable beam training is possible if \(\max_{m \in \gls{SetCodewordsTraining}} \gls{SnrReceivedTrainingAtLevel} \geq \gls{SnrReceivedMin}\), which by using \eqref{eq:snr-training} can be rewritten as
\begin{equation}\label{eq:pilot_symbol_bound}
  \gls{NumSymTxLevel} \geq
  \frac{\gls{SnrReceivedMin}}{\gls{SnrReceivedNormalizedAtLevel}},
\end{equation}
where \(\gls{SnrReceivedNormalizedAtLevel} = \max_{m \in \gls{SetCodewordsTraining}} \gls{SnrReceivedNormalized}\).
Based on \eqref{eq:pilot_symbol_bound}, we define two overhead regimes as follows.
If \(\gls{NumSymTxLevel} \geq \gls{SnrReceivedMin}/\gls{SnrReceivedNormalizedAtLevel} > 1\), more than one pilot symbol is required for reliable beam training, which we refer to as the \gls{dor}.
In this case, the beam training overhead in \eqref{eq:overhead_training_general} is a function of the \gls{snr}.
Moreover, we set \(\gls{NumSymTxLevel} = 1\) if \(\gls{SnrReceivedMin}/\gls{SnrReceivedNormalizedAtLevel} < 1\) because at least one pilot symbol is always required.
In this case, the beam training overhead in \eqref{eq:overhead_training_general} is independent of the \gls{snr}, and we refer to this regime as the \gls{ior}.

Furthermore, it is worth noting that the \gls{dor} overhead regime is linked with the \gls{wbr} and the \gls{nbr} because \(\gls{SnrReceivedNormalizedAtLevel}\) follows the scaling laws in \eqref{eq:snr-scaling-law}.
In order to include these regimes in \eqref{eq:pilot_symbol_bound}, we adopt \eqref{eq:snr-normalized-wbr} and \eqref{eq:snr-normalized-nbr} in Appendix~\ref{sec:proof-snr-scaling} to rewrite \(\gls{SnrReceivedNormalizedAtLevel} = \gls{SnrReceivedFactorBoundWbr} \gls{NumRis} \gls{SizeCodebookAtLevel}\) and \(\gls{SnrReceivedNormalizedAtLevel} = \gls{SnrReceivedFactorBoundNbr} \glsSquared{NumRis}\) for the \gls{wbr} and the \gls{nbr}, respectively, where
\(
  \gls{SnrReceivedFactorBoundWbr} =
  \frac{\gls{PowerTransmitter} \gls{GainTransmitter} \gls{NumAp} \gls{GainReceiver}}{\gls{NoisePower}}
  \frac{\glsSquared{Wavelength} \gls{LengthUnitCellX} \gls{LengthUnitCellY} \gls{AreaRisFactor}}{(4 \pi \gls{DistanceIncident})^2 \gls{Area}}
\) and
\(
  \gls{SnrReceivedFactorBoundNbr} =
  \frac{\gls{PowerTransmitter} \gls{GainTransmitter} \gls{NumAp} \gls{GainReceiver}}{\gls{NoisePower}}
  \frac{\glsSquared{Wavelength} \glsSquared{Wavelength} \abs*{\gls{GainUc}}^2}{\left(4 \pi \gls{DistanceIncident}\right)^2 \left(4 \pi \gls{DistanceReflected}\right)^2}
\).
Note that the \(\max\) operator has been dropped because \eqref{eq:snr-normalized-wbr} and \eqref{eq:snr-normalized-nbr} hold for the codeword that optimally illuminates the user location, i.e., the codeword that maximizes \(\gls{SnrReceivedNormalized}\).
As a result, \eqref{eq:pilot_symbol_bound} can be rewritten as
\begin{equation}\label{eq:pilot_symbol_bound_regimes}
  \gls{NumSymTxLevel} \geq
  \left\{
    \begin{aligned}
      &\frac{\gls{SnrReceivedMin}}{\gls{SnrReceivedFactorBoundWbr} \gls{NumRis} \gls{SizeCodebookAtLevel}} & \qquad & \text{\gls{wbr}}\\
      &\frac{\gls{SnrReceivedMin}}{\gls{SnrReceivedFactorBoundNbr} \glsSquared{NumRis}} & \qquad & \text{\gls{nbr}}.
    \end{aligned}
  \right.
\end{equation}

\begin{remark}\label{rm:number_of_pilots_cancels_bandwidth}
  One can see from~\eqref{eq:overhead_training_general} that the beam training overhead can be reduced by reducing the pilot symbol duration \(\gls{DurationSymbolTraining}\).
  However, this is only possible for high \glspl{snr}, where \(\gls{NumSymTxLevel} \gg
  \gls{SnrReceivedMin}/\gls{SnrReceivedNormalizedAtLevel}\) holds, as can be seen from the following example.
  Assume that one pilot symbol is just sufficient for reliable beam training, i.e., we have \(\gls{SnrReceivedMin}/\gls{SnrReceivedNormalizedAtLevel} = 1\).
  In this case, we can use \(\gls{NumSymTxLevel} = 1\).
  On the other hand, \(\gls{SnrReceivedNormalizedAtLevel} \propto 1/\gls{NoisePower} \propto \gls{DurationSymbolTraining}\) because the noise power \(\gls{NoisePower}\) scales with the signal bandwidth and is thus inversely proportional to the symbol duration.
  Consequently, decreasing \(\gls{DurationSymbolTraining}\) leads to \(\gls{SnrReceivedMin}/\gls{SnrReceivedNormalizedAtLevel} > 1\) and \(\gls{NumSymTxLevel} > 1\) is required to satisfy~\eqref{eq:pilot_symbol_bound}.
  In other words, the positive effect of a shorter pilot symbol duration on the beam training overhead is compensated by the additional pilot symbols required.
\end{remark}

\section{Beam Training Analysis}\label{sec:beam-training-analysis}
\noindent
This section describes the considered \gls{fs}, \gls{hs}, and \gls{ts} beam training strategies, and provides detailed analysis of their respective overheads.
More specifically, we derive \(\gls{OverheadTraining}\) in \eqref{eq:overhead_training_general} and \(\gls{OverheadRelative}\) in \eqref{eq:overall-overhead} considering the different regimes for both overhead and achievable \gls{snr}.
In the following, we refer to a particular beam training strategy using superscript \(\mathsf{x}\in\{\text{\glsxtrshort{fs}},\text{\glsxtrshort{hs}},\text{\glsxtrshort{ts}}\}\).

\subsection{Beam Training Strategies}\label{sec:beam_training_strategies}
\noindent
For each beam training strategy, we first define the corresponding codeword sets \(\gls{SetCodewords}\) and \(\gls{SetCodewordsTraining}\). Then, using \eqref{eq:overhead_training_general} and \eqref{eq:pilot_symbol_bound_regimes}, we derive the overhead that is required for reliable beam training.

\subsubsection{Full Search}
\Gls{fs} beam training is the most robust scheme that selects the optimal codeword according to \eqref{eq:selected-codeword} because the entire codebook is searched and there is no dependence on previous decisions.
Hence, we have \(\gls{NumLevels} = 1\) and \(\gls{SetCodewords} = \gls{SetCodewordsTraining}\), which implies \(\gls{NumCodewordsTraining} = \gls{SizeCodebookAtLevel} = \gls{SizeCodebookAtLevelMax}\).
Then, the beam training overhead is given by
\begin{equation}\label{eq:overhead_training_full}
  \gls{OverheadTrainingFull}
  \geq \gls{TimeResponse} \left(1 + \gls{SizeCodebookAtLevelMax}\right)
  + \gls{DelayFeedback}
  + \gls{DurationSymbolTraining} \gls{OverheadFactorFull},
\end{equation}
where\footnote{Since the overhead only depends on the \gls{snr} if the system operates in the \gls{dor}, we consider three cases: (1) \gls{ior}, (2) \gls{dor} and \gls{wbr}, and (3) \gls{dor} and \gls{nbr}. Moreover, the overheads in the \gls{wbr} and the \gls{nbr} constitute upper and lower bounds for the transition regime, respectively, cf. Section~\ref{sec:achievable_snr}.}
\begin{equation}
  \gls{OverheadFactorFull} =
  \left\{
  \begin{aligned}
    &\gls{SizeCodebookAtLevelMax}
    & \; &\text{\gls{ior}},\\
    &\frac{1}{\gls{SnrReceivedFactorBoundWbr}}
    \frac{\gls{SnrReceivedMin}}{\gls{NumRis}}
    & \; &\text{\gls{dor}, \gls{wbr}},\\
    &\frac{\gls{SizeCodebookAtLevelMax}}{\gls{SnrReceivedFactorBoundNbr} \gls{NumRis}}
    \frac{\gls{SnrReceivedMin}}{\gls{NumRis}}
    & \; &\text{\gls{dor}, \gls{nbr}}.
  \end{aligned}
  \right.
\end{equation}

\subsubsection{Hierarchical Search}
\Gls{hs} beam training is a commonly adopted method that requires less overhead than \gls{fs} beam training~\cite{wei2022codebookdesignbeam,lv2023risaidedfield,wang2023hierarchicalcodebookbased}.
In this work, we adopt a hierarchically structured codebook with \(\gls{NumLevels} > 1\) levels and assume the following scheme.
Beam training starts with a full search over all codewords at level \(l=1\), i.e., \(\gls{SetCodewordsInitial} = \gls{SetCodewordsTrainingInitial}\) and \(\gls{NumCodewordsInitial} = \gls{NumCodewordsTrainingZero}\).
At each subsequent level, the training set \(\gls{SetCodewordsTraining}\) is determined by the subarea illuminated by codeword \(\gls{CodewordBestAtPreviousLevel}\), which is further divided into \(\gls{NumCodewordsFactor} = \abs*{\gls{SetCodewordsTraining}}\) parts.
Since \(\gls{NumCodewordsFactor}\) is usually small, we note that \(\gls{NumCodewordsInitial} > \gls{NumCodewordsFactor}\) may be required to achieve a sufficiently large \gls{snr} at the first codebook level.
Hence, the codebook size at the \(l\)th level is given by \(\abs*{\gls{SetCodewords}} = \gls{SizeCodebookAtLevel} = \gls{NumCodewordsInitial} \gls{NumCodewordsFactorPowerVar}\), but only \(\gls{NumCodewordsTotalHierarchical} = \gls{NumCodewordsInitial} + (\gls{NumLevels}-1)\gls{NumCodewordsFactor}\) codewords are used for beam training.
Consequently, the beam training overhead is given by
\begin{equation}\label{eq:overhead_training_hierarchical}
  \gls{OverheadTrainingHierarchical}
  \geq \gls{TimeResponse} \left(1 + \gls{NumCodewordsTotalHierarchical}\right)
  + \gls{DelayFeedback} \gls{NumLevels}
  + \gls{DurationSymbolTraining} \sum_{l=1}^{\gls{NumLevels}} \gls{OverheadFactorHierarchical},
\end{equation}
where
\begin{equation}
  \gls{OverheadFactorHierarchical} =
  \left\{
  \begin{aligned}
    &\gls{NumCodewordsTraining}
    & \; &\text{\gls{ior}},\\
    &\frac{\gls{NumCodewordsTraining}}{\gls{SnrReceivedFactorBoundWbr} \gls{SizeCodebookAtLevel}}
    \frac{\gls{SnrReceivedMin}}{\gls{NumRis}}
    & \; &\text{\gls{dor}, \gls{wbr}},\\
    &\frac{\gls{NumCodewordsTraining}}{\gls{SnrReceivedFactorBoundNbr} \gls{NumRis}}
    \frac{\gls{SnrReceivedMin}}{\gls{NumRis}}
    & \; &\text{\gls{dor}, \gls{nbr}}.
  \end{aligned}
  \right.
\end{equation}

\subsubsection{Tracking-Based Search}
In order include a beam training strategy with very low overhead in our analysis, we adopt a tracking-based scheme that exploits prior information, e.g., an approximate user location.
Similar to \gls{fs} beam training, \gls{ts} beam training uses a single-level codebook, i.e., \(\gls{NumLevels} = 1\) and \(\gls{SizeCodebookAtLevel} = \gls{SizeCodebookAtLevelMax}\).
However, \gls{ts} beam training exploits knowledge of the codeword that has been selected in the previous frame, in particular, \(\gls{SetCodewordsTraining}\) is determined by the local environment of the previously illuminated subarea~\cite{laue2022performancetradeoffris}.
Assuming that the subareas are arranged in a regular grid, the local environment is defined by \(\gls{NumCodewordsTraining} = 8\) codewords.
Ideally, scanning the local environment is sufficient to continuously track the mobile user without reinitialization\footnote{Reinitialization and initial beam training in general require beam training strategies such as \gls{fs} or \gls{hs}, which cause additional overhead.}.
Thus, the beam training overhead is given by
\begin{equation}\label{eq:overhead_training_tracking}
  \gls{OverheadTrainingTracking}
  \geq \gls{TimeResponse} \left(1 + 8\right)
  + \gls{DelayFeedback}
  + \gls{DurationSymbolTraining} \gls{OverheadFactorTracking},
\end{equation}
where
\begin{equation}
  \gls{OverheadFactorTracking} =
  \left\{
  \begin{aligned}
    &8
    & \; &\text{\gls{ior}},\\
    &\frac{
      8
    }{
      \gls{SnrReceivedFactorBoundWbr}\gls{SizeCodebookAtLevelMax}
    }
    \frac{\gls{SnrReceivedMin}}{\gls{NumRis}}
    & \; &\text{\gls{dor}, \gls{wbr}},\\
    &\frac{
      8
    }{
      \gls{SnrReceivedFactorBoundNbr}\gls{NumRis}
    }
    \frac{\gls{SnrReceivedMin}}{\gls{NumRis}}
    & \; &\text{\gls{dor}, \gls{nbr}}.
  \end{aligned}
  \right.
\end{equation}

\subsection{Overall Overhead}
\noindent
Next, we use the results in \eqref{eq:overhead_training_full}, \eqref{eq:overhead_training_hierarchical}, and \eqref{eq:overhead_training_tracking} to define the overall overhead \(\gls{OverheadRelative}\) for \gls{fs}, \gls{hs}, and \gls{ts} beam training.
In contrast to beam training overhead \(\gls{OverheadTraining}\), overall overhead \(\gls{OverheadRelative}\) includes the frame duration and the channel estimation overhead, which are relevant for the effective rate in \eqref{eq:rate_effective}.

Recall that frame duration \(\gls{DurationFrame}\) corresponds to the duration for which the mobile user is assumed to stay within one subarea, which is mainly determined by user velocity \(\gls{Velocity}\), the user trajectory, and the sizes of the subareas.
Since the user trajectory is usually unknown~\cite{zhang2022rateoverheadtradeoff}, we propose a parameterized model for the frame duration as follows.
As a reference, let \(\gls{PathThroughSubareaMin}=\min_{m \in \gls{SetCodewordsFinal}} \gls{PathThroughSubarea}\) denote the shortest diagonal path through a subarea.
Then, we assume that the user travels a distance of \(p = \gls{FactorFrameDesign} \gls{PathThroughSubareaMin}\) through a subarea, where frame factor \(\gls{FactorFrameDesign} > 0\) is a design parameter\footnote{Although the user's path through a subarea may be larger than \gls{PathThroughSubareaMin}, \(\gls{FactorFrameDesign} < 1\) results in a good balance between beam training overhead and correct beam alignment. A similar approach was adopted in \cite{yang2018hierarchicalcodebookbeam}, where \(\gls{FactorFrameDesign}\) was used to reduce the beam coherence time in order to capture impairments in practical systems.} to account for arbitrary user trajectories.
Moreover, we note that \(\gls{PathThroughSubareaMin} \propto 1/\gls{SizeCodebookAtLevelMax}\) if a given coverage area is fully illuminated by \(\gls{SizeCodebookAtLevelMax}\) beams.
Thus, we define \(\gls{PathThroughSubareaMin} = \gls{RatioAreaCodebook}/\gls{SizeCodebookAtLevelMax}\), where \(\gls{RatioAreaCodebook}\) is specified by the considered sizes and shapes of the subareas.
For example, as assumed for Lemma~\ref{lm:snr-regimes}, if the coverage area is partitioned into rectangular subareas of equal size, \(\gls{PathThroughSubareaMin} = \sqrt{(\gls{AreaY}/\gls{SizeCodebookAtLevelMaxY})^2 + (\gls{AreaZ}/\gls{SizeCodebookAtLevelMaxX})^2}\) and \(\gls{RatioAreaCodebook} = \sqrt{\left(\gls{AreaY}\gls{SizeCodebookAtLevelMaxX}\right)^2+\left(\gls{AreaZ}\gls{SizeCodebookAtLevelMaxY}\right)^2}\), cf. \eqref{eq:diameter}.
Based on these assumptions, the frame duration is parameterized as follows
\begin{equation}\label{eq:duration-frame}
  \gls{DurationFrame}
  = \frac{p}{\gls{Velocity}}
  = \frac{\gls{FactorFrameDesign}}{\gls{Velocity}} \frac{\gls{RatioAreaCodebook}}{\gls{SizeCodebookAtLevelMax}}.
\end{equation}
Substituting~\eqref{eq:duration-frame} in~\eqref{eq:overall-overhead} results in
\(
  \gls{OverheadRelative}
  = \gls{OverheadTraining} \frac{\gls{Velocity} \gls{SizeCodebookAtLevelMax}}{\gls{FactorFrameDesign} \gls{RatioAreaCodebook}} \left(1 - \gls{OverheadEstimationNormalized}\right) + \gls{OverheadEstimationNormalized}
\),
which is rewritten for notational simplicity as
\begin{equation}\label{eq:bound_overhead_training}
  \gls{OverheadRelative} = \gls{SimplificationOverhead} \gls{SizeCodebookAtLevelMax}
  \gls{OverheadTraining}
   + \gls{OverheadEstimationNormalized},
\end{equation}
where \(\gls{SimplificationOverhead} = \frac{\gls{Velocity}}{\gls{FactorFrameDesign} \gls{RatioAreaCodebook}}\left(1 - \gls{OverheadEstimationNormalized}\right)\).
Substituting \eqref{eq:overhead_training_full}, \eqref{eq:overhead_training_hierarchical}, and \eqref{eq:overhead_training_tracking} in \eqref{eq:bound_overhead_training} results in the following overall overheads for reliable beam training
\begin{align}
  \gls{OverheadRelativeFull}
  &\geq \gls{SimplificationOverhead} \gls{SizeCodebookAtLevelMax} \left[
    \gls{TimeResponse} \left(1 + \gls{SizeCodebookAtLevelMax}\right)
    + \gls{DelayFeedback}
    + \gls{DurationSymbolTraining} \gls{OverheadFactorFull}
  \right] + \gls{OverheadEstimationNormalized},\label{eq:overhead_relative_full}\\
  \gls{OverheadRelativeHierarchical}
  &\geq \gls{SimplificationOverhead} \gls{SizeCodebookAtLevelMax} \left[
    \gls{TimeResponse} \left(1 + \gls{NumCodewordsTotalHierarchical}\right)
    + \gls{DelayFeedback} \gls{NumLevels}
    + \gls{DurationSymbolTraining} \sum_{l=1}^{\gls{NumLevels}} \gls{OverheadFactorHierarchical}
  \right] + \gls{OverheadEstimationNormalized},\label{eq:overhead_relative_hierarchical}\\
  \gls{OverheadRelativeTracking}
  &\geq \gls{SimplificationOverhead} \gls{SizeCodebookAtLevelMax} \left[
    \gls{TimeResponse} \left(1 + 8\right)
    + \gls{DelayFeedback}
    + \gls{DurationSymbolTraining} \gls{OverheadFactorTracking}
  \right] + \gls{OverheadEstimationNormalized}.\label{eq:overhead_relative_tracking}
\end{align}

\section{Performance Tradeoff}\label{sec:performance-tradeoff}
\noindent
In this section, we study the performance tradeoff between the overall overhead and the achievable \gls{snr} for \gls{ris} beam training.
First, we analyze the relevance of the overhead and provide an upper bound on \(\gls{Velocity}\), which guarantees that codebook-based \gls{ris} configuration has a negligible impact on the system performance.
Then, we show how \gls{fs}, \gls{hs}, and \gls{ts} beam training lead to different effective rates if the overhead is not negligible.

\subsection{Relevance of Overhead}
\noindent
From \eqref{eq:rate_effective}, one can see that a performance tradeoff between overhead and achievable \gls{snr} only exists if the overall overhead \(\gls{OverheadRelative}\) is sufficiently large.
Moreover, we note that the overall overhead is strongly influenced by the mobile user's velocity because it affects both the subframe duration and the frame duration.
This fundamental relation between \(\gls{OverheadRelative}\) and \(\gls{Velocity}\) is described by the following proposition.

\begin{proposition}\label{prop:overhead_form}
  The overall overhead is a quadratic function of the user velocity and has the form \(\gls{OverheadRelative} = \left(\gls{ParabolaParameterOne} + \gls{ParabolaParameterTwo}\right) \gls{Velocity} - \gls{ParabolaParameterOne} \gls{ParabolaParameterTwo} \glsSquared{Velocity}\), where \(\gls{ParabolaParameterOne} = \frac{\gls{SizeCodebookAtLevelMax} \gls{OverheadTraining}}{\gls{FactorFrameDesign} \gls{RatioAreaCodebook}}\) and \(\gls{ParabolaParameterTwo} = \frac{\gls{OverheadEstimation} 4 \sqrt{\pi}}{3 \gls{Wavelength}}\).
  \(\gls{OverheadRelative}\) is a concave parabola and the maximum value is greater or equal to one.
\end{proposition}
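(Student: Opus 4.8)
The plan is to reduce the claim to elementary algebra by substituting the explicit parameterizations already derived in the excerpt and then reading off the quadratic structure. I would start from \eqref{eq:bound_overhead_training}, which states \(\gls{OverheadRelative} = \gls{SimplificationOverhead}\gls{SizeCodebookAtLevelMax}\gls{OverheadTraining} + \gls{OverheadEstimationNormalized}\) with \(\gls{SimplificationOverhead} = \frac{\gls{Velocity}}{\gls{FactorFrameDesign}\gls{RatioAreaCodebook}}\left(1 - \gls{OverheadEstimationNormalized}\right)\). The key observation is that the normalized estimation overhead defined below \eqref{eq:overall-overhead}, \(\gls{OverheadEstimationNormalized} = \gls{OverheadEstimation}4\sqrt{\pi}\,\gls{Velocity}/(3\gls{Wavelength})\), is \emph{linear} in \(\gls{Velocity}\) and equals exactly \(\gls{ParabolaParameterTwo}\gls{Velocity}\), while \(\frac{\gls{SizeCodebookAtLevelMax}\gls{OverheadTraining}}{\gls{FactorFrameDesign}\gls{RatioAreaCodebook}} = \gls{ParabolaParameterOne}\). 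Inserting these identifications gives
\[
  \gls{OverheadRelative} = \gls{ParabolaParameterOne}\gls{Velocity}\left(1 - \gls{ParabolaParameterTwo}\gls{Velocity}\right) + \gls{ParabolaParameterTwo}\gls{Velocity} = \left(\gls{ParabolaParameterOne} + \gls{ParabolaParameterTwo}\right)\gls{Velocity} - \gls{ParabolaParameterOne}\gls{ParabolaParameterTwo}\glsSquared{Velocity},
\]
which is precisely the stated form. This first step is pure bookkeeping, so the only care needed is to verify that the two linear contributions—one from the beam training overhead scaled by \(\gls{SimplificationOverhead}\) and one from \(\gls{OverheadEstimationNormalized}\) itself—combine correctly and that the cross term \(-\gls{ParabolaParameterOne}\gls{ParabolaParameterTwo}\glsSquared{Velocity}\) comes solely from the product \(\gls{Velocity}\cdot\gls{OverheadEstimationNormalized}\) inside \(\gls{SimplificationOverhead}\).

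Next I would establish concavity by inspecting the sign of the leading coefficient \(-\gls{ParabolaParameterOne}\gls{ParabolaParameterTwo}\). Here I would argue that \(\gls{ParabolaParameterOne} > 0\) and \(\gls{ParabolaParameterTwo} > 0\), since each is a ratio of strictly positive physical quantities: the codebook size \(\gls{SizeCodebookAtLevelMax}\), the beam training overhead \(\gls{OverheadTraining}\), the frame factor \(\gls{FactorFrameDesign}\), the area ratio \(\gls{RatioAreaCodebook}\), the estimation overhead \(\gls{OverheadEstimation}\), and the wavelength \(\gls{Wavelength}\) are all positive. Hence \(-\gls{ParabolaParameterOne}\gls{ParabolaParameterTwo} < 0\) and the parabola is concave.

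For the maximum value I would either complete the square or set the derivative to zero; the vertex lies at \(\gls{Velocity} = \frac{\gls{ParabolaParameterOne} + \gls{ParabolaParameterTwo}}{2\gls{ParabolaParameterOne}\gls{ParabolaParameterTwo}}\), yielding a maximum of \(\frac{\left(\gls{ParabolaParameterOne} + \gls{ParabolaParameterTwo}\right)^2}{4\gls{ParabolaParameterOne}\gls{ParabolaParameterTwo}}\). The bound \(\geq 1\) then follows from the arithmetic–geometric mean inequality: since \(\left(\gls{ParabolaParameterOne} - \gls{ParabolaParameterTwo}\right)^2 \geq 0\) is equivalent to \(\left(\gls{ParabolaParameterOne} + \gls{ParabolaParameterTwo}\right)^2 \geq 4\gls{ParabolaParameterOne}\gls{ParabolaParameterTwo}\), the maximum is at least one, with equality exactly when \(\gls{ParabolaParameterOne} = \gls{ParabolaParameterTwo}\). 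There is no genuine analytical obstacle here—the entire argument is elementary algebra—so the proof is essentially mechanical; the single non-routine insight is recognizing that the \(\geq 1\) claim is nothing more than AM–GM applied to the two positive parabola parameters, and the main thing to watch is the positivity bookkeeping that underpins both the concavity and the inequality.
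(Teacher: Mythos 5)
Your proof is correct and follows essentially the same route as the paper's: substituting the definitions of \(\gls{SimplificationOverhead}\) and \(\gls{OverheadEstimationNormalized}\) into \eqref{eq:bound_overhead_training} to read off the quadratic form, using positivity of \(\gls{ParabolaParameterOne}\gls{ParabolaParameterTwo}\) for concavity, and locating the vertex to obtain \(\frac{(\gls{ParabolaParameterOne}+\gls{ParabolaParameterTwo})^2}{4\gls{ParabolaParameterOne}\gls{ParabolaParameterTwo}} \geq 1\) via \((\gls{ParabolaParameterOne}-\gls{ParabolaParameterTwo})^2 \geq 0\), which is exactly the paper's final step phrased as AM--GM.
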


\begin{proof}
  By definition of \(\gls{SimplificationOverhead}\) and \(\gls{OverheadEstimationNormalized}\), \eqref{eq:bound_overhead_training} can be written as
  \(
    \gls{OverheadRelative}
    = \frac{\gls{Velocity}}{\gls{FactorFrameDesign} \gls{RatioAreaCodebook}}\left(1 - \frac{\gls{OverheadEstimation} 4 \sqrt{\pi} \gls{Velocity}}{3 \gls{Wavelength}}\right) \gls{SizeCodebookAtLevelMax} \gls{OverheadTraining}
    + \frac{\gls{OverheadEstimation} 4 \sqrt{\pi} \gls{Velocity}}{3 \gls{Wavelength}}
    = \left(\gls{ParabolaParameterOne} + \gls{ParabolaParameterTwo}\right) \gls{Velocity}
      - \gls{ParabolaParameterOne} \gls{ParabolaParameterTwo} \glsSquared{Velocity}
  \) for \(\gls{ParabolaParameterOne} = \frac{\gls{SizeCodebookAtLevelMax} \gls{OverheadTraining}}{\gls{FactorFrameDesign} \gls{RatioAreaCodebook}}\) and \(\gls{ParabolaParameterTwo} = \frac{\gls{OverheadEstimation} 4 \sqrt{\pi}}{3 \gls{Wavelength}}\).
  Since \(\gls{ParabolaParameterOne} \gls{ParabolaParameterTwo} > 0\), \(\gls{OverheadRelative}\) is a concave parabola (opening to the bottom).
  Thus, the maximum value is given by the vertex of the parabola, which is found by evaluating \(\frac{d}{d \gls{Velocity}} \gls{OverheadRelative} = 0\) and given by \(\max_{\gls{Velocity}} \gls{OverheadRelative} = \frac{(a+b)^2}{4ab}\) at \(\gls{Velocity} = \frac{a+b}{2ab}\).
  Since \(\gls{ParabolaParameterOne} \gls{ParabolaParameterTwo} > 0\), one can rewrite \(\frac{(a+b)^2}{4ab} \geq 1\) as \((a-b)^2 \geq 0\), which always holds.
  Therefore, \(\max_{\gls{Velocity}} \gls{OverheadRelative} \geq 1\).
\end{proof}

We note that, although \(\gls{OverheadRelative}\) is a quadratic function of the user velocity, the overall overhead does not quadratically grow with \(\gls{Velocity}\).
Instead, since \(\frac{d}{d \gls{Velocity}} \gls{OverheadRelative} = \gls{ParabolaParameterOne} + \gls{ParabolaParameterTwo} - 2 \gls{ParabolaParameterOne} \gls{ParabolaParameterTwo} \gls{Velocity}\), the gradient of \(\gls{OverheadRelative}\) is a linearly decreasing function with a maximum value of \(\gls{ParabolaParameterOne} + \gls{ParabolaParameterTwo}\) at \(\gls{Velocity} = 0\).

Furthermore, we note that the impact of \(\gls{OverheadRelative}\) on the effective rate in \eqref{eq:rate_effective} is negligible if \(\gls{OverheadRelative} < \gls{OverheadThreshold}\), \(0 \leq \gls{OverheadThreshold} \ll 1\).
Based on Proposition~\ref{prop:overhead_form}, this condition for the overall overhead leads to an upper bound for the mobile user's velocity.

\begin{lemma}\label{lm:negligible_overhead}
  The overall overhead has negligible impact, i.e., \(\gls{OverheadRelative} < \gls{OverheadThreshold}\), on the effective rate if
  \(
    \gls{Velocity}
    < \frac{
      \gls{ParabolaParameterOne} + \gls{ParabolaParameterTwo}
      - \sqrt{(\gls{ParabolaParameterOne}+\gls{ParabolaParameterTwo})^2 - 4\gls{ParabolaParameterOne}\gls{ParabolaParameterTwo}\gls{OverheadThreshold}}
    }{
      2\gls{ParabolaParameterOne}\gls{ParabolaParameterTwo}
  }\).
\end{lemma}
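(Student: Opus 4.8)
The plan is to start directly from the quadratic representation of the overall overhead established in Proposition~\ref{prop:overhead_form}, namely $\gls{OverheadRelative} = (\gls{ParabolaParameterOne} + \gls{ParabolaParameterTwo}) \gls{Velocity} - \gls{ParabolaParameterOne} \gls{ParabolaParameterTwo} \glsSquared{Velocity}$, and to translate the requirement $\gls{OverheadRelative} < \gls{OverheadThreshold}$ into a quadratic inequality in the velocity $\gls{Velocity}$. Moving every term to one side yields $\gls{ParabolaParameterOne} \gls{ParabolaParameterTwo} \glsSquared{Velocity} - (\gls{ParabolaParameterOne} + \gls{ParabolaParameterTwo}) \gls{Velocity} + \gls{OverheadThreshold} > 0$, a quadratic in $\gls{Velocity}$ whose leading coefficient $\gls{ParabolaParameterOne} \gls{ParabolaParameterTwo}$ is strictly positive; this positivity was already observed in the proof of Proposition~\ref{prop:overhead_form}.

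Next I would apply the quadratic formula to obtain the two roots $\frac{(\gls{ParabolaParameterOne} + \gls{ParabolaParameterTwo}) \pm \sqrt{(\gls{ParabolaParameterOne} + \gls{ParabolaParameterTwo})^2 - 4 \gls{ParabolaParameterOne} \gls{ParabolaParameterTwo} \gls{OverheadThreshold}}}{2 \gls{ParabolaParameterOne} \gls{ParabolaParameterTwo}}$. Because this upward-opening parabola is positive exactly outside the interval delimited by its two roots, and because we are interested in the low-velocity regime—small velocities make beam training infrequent and hence the overhead negligible—the relevant solution set is $\gls{Velocity}$ below the smaller root. That smaller root, carrying the minus sign in front of the square root, is precisely the bound claimed in the lemma.

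The only point requiring care is confirming that the two roots are real and that the smaller root is nonnegative, so that the stated bound is meaningful rather than vacuous. Reality of the roots amounts to a nonnegative discriminant, $(\gls{ParabolaParameterOne} + \gls{ParabolaParameterTwo})^2 \geq 4 \gls{ParabolaParameterOne} \gls{ParabolaParameterTwo} \gls{OverheadThreshold}$, which follows immediately from Proposition~\ref{prop:overhead_form}: the established maximum of the parabola equals $\frac{(\gls{ParabolaParameterOne} + \gls{ParabolaParameterTwo})^2}{4 \gls{ParabolaParameterOne} \gls{ParabolaParameterTwo}} \geq 1 > \gls{OverheadThreshold}$ under the assumption $0 \leq \gls{OverheadThreshold} \ll 1$. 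Positivity of the smaller root then follows because both the sum $\frac{\gls{ParabolaParameterOne} + \gls{ParabolaParameterTwo}}{\gls{ParabolaParameterOne} \gls{ParabolaParameterTwo}}$ and the product $\frac{\gls{OverheadThreshold}}{\gls{ParabolaParameterOne} \gls{ParabolaParameterTwo}}$ of the two roots are positive. I do not anticipate any genuine obstacle; the result is essentially a careful bookkeeping of which root and which side of the concave parabola correspond to the desired negligible-overhead regime.
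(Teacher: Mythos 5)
Your proof is correct and follows essentially the same route as the paper: rewrite \(\gls{OverheadRelative} < \gls{OverheadThreshold}\) as an upward-opening quadratic inequality in \(\gls{Velocity}\), identify its two roots, and certify a positive discriminant from Proposition~\ref{prop:overhead_form} via \((\gls{ParabolaParameterOne}+\gls{ParabolaParameterTwo})^2/(4\gls{ParabolaParameterOne}\gls{ParabolaParameterTwo}) \geq 1 > \gls{OverheadThreshold}\). The one substantive difference is how the second branch of the solution set (velocities above the \emph{larger} root) is discarded. Since the lemma only asserts an ``if'' direction, your argument does not actually need that branch, so there is no logical gap in your proof of the stated claim; but your reason for ignoring it---that small velocities are ``the regime of interest''---is an informal appeal, whereas the paper proves a concrete fact: Appendix~\ref{sec:proof-negligible-overhead} shows that any velocity exceeding the larger root leaves zero time for data transmission within a frame, i.e., that branch is physically infeasible rather than merely uninteresting. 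That feasibility argument (comparing the velocity at which \(\gls{DurationFrame} \leq \gls{OverheadTraining} + \gls{OverheadEstimation}\) against the larger root) is the only part of the paper's proof that goes beyond the quadratic bookkeeping you performed, and it is what legitimizes stating the lemma with the upper bound alone. So contrary to your closing remark, choosing the correct root is not pure bookkeeping: if you want your write-up to establish that the stated bound is the \emph{only} meaningful solution of \(\gls{OverheadRelative} < \gls{OverheadThreshold}\), you would need to add that infeasibility argument.
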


\begin{proof}
  Since \(\gls{OverheadRelative}\) is a quadratic function in \(\gls{Velocity}\), solving \(\gls{OverheadRelative} < \gls{OverheadThreshold}\) for \(\gls{Velocity}\) leads to the lower bound \(\gls{Velocity} > \frac{\gls{ParabolaParameterOne} + \gls{ParabolaParameterTwo} + \sqrt{(\gls{ParabolaParameterOne}+\gls{ParabolaParameterTwo})^2 - 4\gls{ParabolaParameterOne}\gls{ParabolaParameterTwo}\gls{OverheadThreshold}}}{2\gls{ParabolaParameterOne}\gls{ParabolaParameterTwo}}\) and the upper bound \(\gls{Velocity} < \frac{\gls{ParabolaParameterOne} + \gls{ParabolaParameterTwo} - \sqrt{(\gls{ParabolaParameterOne}+\gls{ParabolaParameterTwo})^2 - 4\gls{ParabolaParameterOne}\gls{ParabolaParameterTwo}\gls{OverheadThreshold}}}{2\gls{ParabolaParameterOne}\gls{ParabolaParameterTwo}}\), which are always distinct because \((\gls{ParabolaParameterOne}+\gls{ParabolaParameterTwo})^2 - 4\gls{ParabolaParameterOne}\gls{ParabolaParameterTwo}\gls{OverheadThreshold} > 0\) due to \(\gls{OverheadThreshold} < 1\) and \(\frac{(\gls{ParabolaParameterOne}+\gls{ParabolaParameterTwo})^2}{4\gls{ParabolaParameterOne}\gls{ParabolaParameterTwo}} \geq 1\).
  However, only the upper bound for \(\gls{Velocity}\) is a feasible solution because the lower bound leads to zero time for data transmission, see Appendix~\ref{sec:proof-negligible-overhead}.
\end{proof}

Lemma~\ref{lm:negligible_overhead} can be applied to \gls{fs}, \gls{hs}, or \gls{ts} beam training, each resulting in a different upper bound on \(\gls{Velocity}\).
In order to remove this dependence on a particular beam training strategy, one can find a generalized bound on \(\gls{Velocity}\) as follows.

\begin{corollary}\label{cr:condition_no_impact}
  Regardless of whether \gls{fs}, \gls{hs}, or \gls{ts} beam training is employed, the overhead for codebook-based \gls{ris} configuration has negligible impact on the effective rate if
  \begin{equation}\label{eq:velocity_bound}
    \gls{Velocity}
    < \gls{VelocityBound}
    = \frac{
      \gls{ParabolaParameterOneBound} + \gls{ParabolaParameterTwo}
      - \sqrt{
        (\gls{ParabolaParameterOneBound}+\gls{ParabolaParameterTwo})^2
        - 4\gls{ParabolaParameterOneBound}\gls{ParabolaParameterTwo}\gls{OverheadThreshold}
      }
    }{
      2\gls{ParabolaParameterOneBound}\gls{ParabolaParameterTwo}
    },
  \end{equation}
  where \(\gls{ParabolaParameterOneBound} = \frac{\gls{SizeCodebookAtLevelMax} \gls{OverheadTrainingBound}}{\gls{FactorFrameDesign} \gls{RatioAreaCodebook}}\) and \(\gls{OverheadTrainingBound} = 1.25 \gls{NumSymTxLevelMax} \gls{SizeCodebookAtLevelMax} \gls{DurationSymbolTraining}
  + \left(1 + \gls{SizeCodebookAtLevelMax}\right) \gls{TimeResponse}
  + \gls{NumLevels} \gls{DelayFeedback}\).
\end{corollary}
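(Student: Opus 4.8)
The plan is to reduce the claim to Lemma~\ref{lm:negligible_overhead}. Note that \(\gls{VelocityBound}\) in \eqref{eq:velocity_bound} is precisely the velocity bound of Lemma~\ref{lm:negligible_overhead} with \(\gls{ParabolaParameterOne}\) replaced by \(\gls{ParabolaParameterOneBound}\), while \(\gls{ParabolaParameterTwo}\) is unchanged (it does not depend on the beam training strategy). Hence it suffices to establish two facts: (a) \(\gls{OverheadTrainingBound}\) upper-bounds the training overhead \(\gls{OverheadTraining}\) of each of the three strategies, so that \(\gls{ParabolaParameterOneBound} \geq \gls{ParabolaParameterOne}\) in every case; and (b) increasing \(\gls{ParabolaParameterOne}\) can only increase the overall overhead \(\gls{OverheadRelative}\) at a fixed velocity. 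Together these let me invoke Lemma~\ref{lm:negligible_overhead} once, with the conservative parameter \(\gls{ParabolaParameterOneBound}\), and then transfer the conclusion to every strategy.

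For (a) I would compare \eqref{eq:overhead_training_full}, \eqref{eq:overhead_training_hierarchical}, and \eqref{eq:overhead_training_tracking} against \(\gls{OverheadTrainingBound}\) term by term. The response-time contribution \((1+\gls{SizeCodebookAtLevelMax})\gls{TimeResponse}\) matches the \gls{fs} term exactly and dominates the \gls{hs} and \gls{ts} terms because the number of scanned codewords never exceeds the full codebook size, i.e.\ \(\gls{NumCodewordsTotalHierarchical} \leq \gls{SizeCodebookAtLevelMax}\) and \(8 \leq \gls{SizeCodebookAtLevelMax}\) for any non-degenerate codebook. The feedback contribution \(\gls{NumLevels}\gls{DelayFeedback}\) matches \gls{hs} and dominates the single-level contribution \(\gls{DelayFeedback}\) of \gls{fs} and \gls{ts}. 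Finally, the pilot contribution \(1.25\,\gls{NumSymTxLevelMax}\gls{SizeCodebookAtLevelMax}\gls{DurationSymbolTraining}\) dominates \(\gls{DurationSymbolTraining}\) times the overhead factor of each strategy in each of the \gls{ior}, \gls{dor}/\gls{wbr}, and \gls{dor}/\gls{nbr} cases, since the total number of pilots is at most \(\gls{NumSymTxLevelMax}\) per scanned codeword and at most \(\gls{SizeCodebookAtLevelMax}\) codewords are scanned, with the factor \(1.25\) supplying the slack needed to reinstate the ceiling \(\gls{NumSymTxLevel}=\ceil*{\gls{SnrReceivedMin}/\gls{SnrReceivedNormalizedAtLevel}}\) that reliable beam training requires.

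For (b) I would use the quadratic form from Proposition~\ref{prop:overhead_form}, \(\gls{OverheadRelative} = (\gls{ParabolaParameterOne}+\gls{ParabolaParameterTwo})\gls{Velocity} - \gls{ParabolaParameterOne}\gls{ParabolaParameterTwo}\glsSquared{Velocity}\). Differentiating at fixed \(\gls{Velocity}\) gives \(\partial\gls{OverheadRelative}/\partial\gls{ParabolaParameterOne} = \gls{Velocity}(1 - \gls{ParabolaParameterTwo}\gls{Velocity}) = \gls{Velocity}(1-\gls{OverheadEstimationNormalized})\), which is positive throughout the feasible range \(\gls{Velocity} < \gls{VelocityBound}\) because there \(\gls{OverheadEstimationNormalized} < 1\) (in fact \(\gls{OverheadEstimationNormalized} < 2\gls{OverheadThreshold} \ll 1\)). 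Thus \(\gls{OverheadRelative}\) is increasing in \(\gls{ParabolaParameterOne}\), so substituting the larger \(\gls{ParabolaParameterOneBound}\) yields an overhead that dominates that of every strategy. Applying Lemma~\ref{lm:negligible_overhead} with parameter \(\gls{ParabolaParameterOneBound}\) then shows that \(\gls{Velocity} < \gls{VelocityBound}\) forces this dominating overhead below \(\gls{OverheadThreshold}\), and by the monotonicity the actual \(\gls{OverheadRelative}\) of \gls{fs}, \gls{hs}, and \gls{ts} is no larger, hence also below \(\gls{OverheadThreshold}\); this is the claim.

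The main obstacle is the pilot term in (a). Unlike the response-time and feedback terms, the overhead factors \(\gls{OverheadFactorFull}\), \(\gls{OverheadFactorHierarchical}\), and \(\gls{OverheadFactorTracking}\) are piecewise-defined across the \gls{ior}, \gls{dor}/\gls{wbr}, and \gls{dor}/\gls{nbr} cases and are written with the real-valued ratio \(\gls{SnrReceivedMin}/\gls{SnrReceivedNormalizedAtLevel}\) rather than its integer ceiling. The delicate point is to confirm that the single constant \(1.25\) together with the worst-case count \(\gls{NumSymTxLevelMax}\) simultaneously dominates all of these cases for all three strategies once the ceiling is reinstated. I expect to handle this by noting that in the \gls{ior} the bound is immediate because \(\gls{NumSymTxLevel}=1\) and \(1.25\,\gls{NumSymTxLevelMax}\geq 1\), whereas in the \gls{dor} the per-codeword pilot count is large enough that \(\ceil*{x}\leq 1.25\,x\) holds, so that the rounding is absorbed by the chosen slack.
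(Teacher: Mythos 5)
Your overall architecture is exactly the paper's: bound the training overhead of all three strategies by a single \(\gls{OverheadTrainingBound}\), use monotonicity of \(\gls{OverheadRelative}\) in \(\gls{OverheadTraining}\) (equivalently in \(\gls{ParabolaParameterOne}\), since \(\gls{SimplificationOverhead}>0\)), and invoke Lemma~\ref{lm:negligible_overhead} with the conservative parameter \(\gls{ParabolaParameterOneBound}\). Your step (b) and the response-time/feedback comparisons in (a) are fine. However, there is a genuine gap in your treatment of the pilot term for \gls{hs} beam training, which is precisely the delicate case you flagged. Your premise that ``the total number of pilots is at most \(\gls{NumSymTxLevelMax}\) per scanned codeword'' is false in the \gls{dor}/\gls{wbr}: there, \eqref{eq:pilot_symbol_bound_regimes} gives \(\gls{NumSymTxLevel} \propto 1/\gls{SizeCodebookAtLevel}\), so the lower hierarchical levels need \emph{more} pilots per codeword, namely \(\gls{NumSymTxLevel} = \gls{NumCodewordsFactorScaling}\gls{NumSymTxLevelMax} \gg \gls{NumSymTxLevelMax}\). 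Your product bound (\(\gls{NumSymTxLevelMax}\) per codeword times at most \(\gls{SizeCodebookAtLevelMax}\) codewords) therefore does not cover \gls{hs} at all; note also that \gls{hs} scans only \(\gls{NumCodewordsTotalHierarchical} \ll \gls{SizeCodebookAtLevelMax}\) codewords, so both factors of your product are wrong in a way that does not obviously cancel.

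The paper closes this case with a geometric-series argument that your proposal is missing: in the worst case (\gls{dor}, \gls{wbr}) the level-1 full search alone already costs \(\gls{NumSymTxLevel}[l{=}1]\cdot\gls{NumCodewordsInitial} = \gls{NumSymTxLevelMax}\gls{SizeCodebookAtLevelMax}\) pilots (exactly the \gls{fs} budget), and the remaining levels add a tail
\begin{equation*}
  \gls{DurationSymbolTraining}\gls{NumSymTxLevelMax}\sum_{j=1}^{\gls{NumLevels}-1}\gls{NumCodewordsFactor}^{\,j}
  \;\le\; \gls{DurationSymbolTraining}\gls{NumSymTxLevelMax}\,\frac{\gls{NumCodewordsFactorPowerHigh}-1}{1-1/\gls{NumCodewordsFactor}}
  \;<\; 0.25\,\gls{NumSymTxLevelMax}\gls{SizeCodebookAtLevelMax}\gls{DurationSymbolTraining},
\end{equation*}
where the last step uses \(\gls{NumCodewordsFactor}\ge 2\) and \(\gls{SizeCodebookAtZero} > 8\). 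This is where the constant \(1.25\) comes from: it absorbs the extra pilots required at the lower levels of the hierarchy, not integer rounding. Consequently, allocating the \(1.25\) slack to reinstating the ceiling \(\ceil*{\gls{SnrReceivedMin}/\gls{SnrReceivedNormalizedAtLevel}}\), as you propose, leaves nothing to absorb this geometric tail (and the ceiling bound \(\ceil*{x}\le 1.25x\) itself fails for small \(x\)); the paper instead works throughout with the real-valued lower bounds in \eqref{eq:overhead_training_full}--\eqref{eq:overhead_training_tracking}, so no ceiling slack is needed. To repair your proof, replace the per-codeword counting for \gls{hs} by the explicit evaluation of \(\sum_{l=1}^{\gls{NumLevels}}\gls{NumSymTxLevel}\gls{NumCodewordsTraining}\) under \(\gls{NumSymTxLevel}=\gls{NumCodewordsFactorScaling}\gls{NumSymTxLevelMax}\) and the geometric-series bound above.
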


\begin{proof}
  Based on \eqref{eq:bound_overhead_training}, one can see that \(\gls{OverheadRelative}\) increases with \(\gls{OverheadTraining}\) because \(\gls{SimplificationOverhead} > 0\).
  Thus, the overall overhead is bounded as
  \(
    \gls{OverheadRelative} \leq \gls{OverheadRelativeBound}
    = \gls{SimplificationOverhead} \gls{SizeCodebookAtLevelMax} \gls{OverheadTrainingBound}
    + \gls{OverheadEstimationNormalized}
  \), where
  \(
    \gls{OverheadTrainingBound}
    \geq \max_{\mathsf{x} \in \{\text{\gls{fs}},\text{\gls{hs}},\text{\gls{ts}}\}} \gls{OverheadTrainingX}
  \) denotes an upper bound for the beam  training overhead.
  For the problem at hand, we find \(\gls{OverheadTrainingBound} = 1.25 \gls{NumSymTxLevelMax} \gls{SizeCodebookAtLevelMax} \gls{DurationSymbolTraining}
  + \left(1 + \gls{SizeCodebookAtLevelMax}\right) \gls{TimeResponse}
  + \gls{NumLevels} \gls{DelayFeedback}\), see Appendix~\ref{sec:proof-overhead-bound}.
  Therefore, \(\gls{Velocity} < \frac{\gls{ParabolaParameterOneBound} + \gls{ParabolaParameterTwo} - \sqrt{(\gls{ParabolaParameterOneBound}+\gls{ParabolaParameterTwo})^2 - 4\gls{ParabolaParameterOneBound}\gls{ParabolaParameterTwo}\gls{OverheadThreshold}}}{2\gls{ParabolaParameterOneBound}\gls{ParabolaParameterTwo}}\) results from Lemma~\ref{lm:negligible_overhead}, where \(\gls{OverheadRelative}\) is replaced by \(\gls{OverheadRelativeBound}\).
\end{proof}

To illustrate the results of Corollary~\ref{cr:condition_no_impact}, we show \(\gls{OverheadRelativeBound}\) as a function of \(\gls{Velocity}\) in Fig.~\ref{fig:relevance_analytical}, where different codebook sizes and \gls{ris} response times are considered.
As one can see, the overall overhead does not only increase with the codebook size, but also significantly depends on the \gls{ris} response time \(\gls{TimeResponse}\).
For example, for a fast \gls{ris} with \(\gls{TimeResponse} = \qty{1}{\micro\second}\), Fig.~\ref{fig:relevance_analytical} shows that the maximum overhead is below \qty{10}{\percent} if \(\gls{Velocity} < \qty{20}{\kilo\metre/\hour}\) and \(\gls{SizeCodebookAtLevelMax} \leq 512\).
In contrast, for a slow \gls{ris} with \(\gls{TimeResponse} = \qty{1}{\milli\second}\), the same low overhead is only achieved if \(\gls{Velocity} < \qty{0.45}{\kilo\metre/\hour}\).
However, we note that slow \glspl{ris} with large \(\gls{TimeResponse}\) can still be efficiently configured using beam training if \(\gls{OverheadRelative} \ll \gls{OverheadRelativeBound}\).
For example, our numerical results in Section~\ref{sec:numerical_results} show that \gls{ts} beam training achieves high effective rates for \(\gls{Velocity} > \qty{20}{\kilo\metre/\hour}\) and \(\gls{TimeResponse} = \qty{1}{\milli\second}\), see Fig.~\ref{fig:relevance}.

\begin{figure}[!t]
  \centering
  \includegraphics{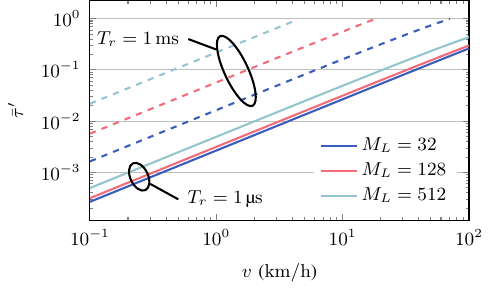}
  \caption{Upper bound for overall overhead as a function of the user velocity, considering different codebook sizes and \gls{ris} response times \(\gls{TimeResponse}\) (cf. Table~\ref{tab:technology}); frame factor \(\gls{FactorFrameDesign} = 0.15\), \(\gls{Area} = \qtyproduct{100x50}{\metre}\), \(\gls{DurationSymbolTraining} = \qty{1}{\micro\second}\), \(\gls{DelayFeedback} = \qty{0.1}{\milli\second}\), and \(\gls{NumSymTxLevelMax} = 8\).}\label{fig:relevance_analytical}
\end{figure}

\subsection{Tradeoff Analysis}
\noindent
In the previous section, we have shown that for certain system parameters the performance may not be affected by the beam training overhead.
In this section, we reveal how non-negligible overheads lead to different effective rates for \gls{fs}, \gls{hs}, and \gls{ts}.
Thereby, we assume \(\gls{NumCodewordsInitial} > 8\), which is a necessary condition for the considered \gls{ts} beam training and may be required for providing sufficiently large reflection gains in practice~\cite{wang2022beamtrainingalignment}.
For \gls{hs} beam training, we assume \(\gls{NumCodewordsFactor} > 1\) and \(\gls{NumLevels} > 1\), and note that the codebook at level \(\gls{NumLevels}\) is adopted for the single-level schemes, i.e., \gls{fs} and \gls{ts} beam training.

\begin{lemma}\label{lm:ts_superior}
  Assuming optimal codeword selection, \gls{ts} beam training leads to higher effective rates than \gls{fs} and \gls{hs} beam training.
  Moreover, \gls{hs} beam training outperforms \gls{fs} beam training or vice versa, depending on the considered system parameters.
\end{lemma}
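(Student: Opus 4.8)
The plan is to reduce the comparison of effective rates to a comparison of overheads, and then to order the three beam training overheads term by term. First I would invoke the assumption of optimal codeword selection: since \gls{fs}, \gls{hs}, and \gls{ts} beam training all ultimately select from the same final-level codebook of size \(\gls{SizeCodebookAtLevelMax}\) and, under error-free selection, all return the globally optimal codeword for the current user location, the resulting data \gls{snr} \(\gls{SnrReceivedData}\) is identical for the three strategies. Consequently, the expectation \(\E*{\log_2(1+\gls{SnrReceivedData})}\) in \eqref{eq:rate_effective} is common to all three, and the effective rate differs only through the factor \(1-\gls{OverheadRelative}\). Writing \eqref{eq:bound_overhead_training} as \(\gls{OverheadRelative} = \gls{SimplificationOverhead}\gls{SizeCodebookAtLevelMax}\gls{OverheadTraining} + \gls{OverheadEstimationNormalized}\) with the common, strictly positive prefactor \(\gls{SimplificationOverhead}\gls{SizeCodebookAtLevelMax}\) and common additive term \(\gls{OverheadEstimationNormalized}\), the effective rate \(\gls{RateEffective}\) is a strictly decreasing function of the beam training overhead \(\gls{OverheadTraining}\). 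Hence it suffices to order \(\gls{OverheadTrainingTracking}\), \(\gls{OverheadTrainingHierarchical}\), and \(\gls{OverheadTrainingFull}\) from \eqref{eq:overhead_training_full}--\eqref{eq:overhead_training_tracking}.

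Next I would establish that \gls{ts} has the smallest training overhead by comparing \eqref{eq:overhead_training_tracking} with \eqref{eq:overhead_training_full} and \eqref{eq:overhead_training_hierarchical} term by term. For \gls{ts} versus \gls{fs}: the response-time term satisfies \(\gls{TimeResponse}(1+8) < \gls{TimeResponse}(1+\gls{SizeCodebookAtLevelMax})\) because \(\gls{SizeCodebookAtLevelMax} \geq \gls{NumCodewordsInitial} > 8\); the feedback terms coincide; and in each of the three regimes the pilot-symbol factor \(\gls{OverheadFactorTracking}\) carries the coefficient \(8\) exactly where \(\gls{OverheadFactorFull}\) carries \(\gls{SizeCodebookAtLevelMax}\), so \(\gls{OverheadFactorTracking} < \gls{OverheadFactorFull}\). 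For \gls{ts} versus \gls{hs}: since \(\gls{NumCodewordsTotalHierarchical} = \gls{NumCodewordsInitial}+(\gls{NumLevels}-1)\gls{NumCodewordsFactor} > 8\) and \(\gls{NumLevels} > 1\), both the response-time term \(\gls{TimeResponse}(1+\gls{NumCodewordsTotalHierarchical})\) and the feedback term \(\gls{DelayFeedback}\gls{NumLevels}\) strictly exceed their \gls{ts} counterparts. The only step requiring care is the pilot term \(\sum_{l} \gls{OverheadFactorHierarchical}\); here I would note that the level-\(1\) full search of \gls{hs} (over \(\gls{NumCodewordsInitial} > 8\) codewords) alone dominates the entire \(8\)-codeword \gls{ts} search in every regime, so the sum over all levels is strictly larger than \(\gls{OverheadFactorTracking}\). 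Combining the three terms gives \(\gls{OverheadTrainingTracking} < \gls{OverheadTrainingHierarchical}\) and \(\gls{OverheadTrainingTracking} < \gls{OverheadTrainingFull}\), hence \gls{ts} yields the highest effective rate.

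Finally, for the \gls{hs}-versus-\gls{fs} claim I would show that the sign of \(\gls{OverheadTrainingFull} - \gls{OverheadTrainingHierarchical}\) is genuinely parameter dependent. Forming this difference, \gls{fs} scans \(\gls{SizeCodebookAtLevelMax}\) codewords at a single level whereas \gls{hs} scans only \(\gls{NumCodewordsTotalHierarchical} < \gls{SizeCodebookAtLevelMax}\) codewords spread over \(\gls{NumLevels}\) levels, so \gls{hs} saves on the response-time and pilot-symbol contributions proportional to the number of scanned codewords but pays an additional feedback cost \((\gls{NumLevels}-1)\gls{DelayFeedback}\) and extra per-level switching. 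The difference therefore has the schematic form (codeword savings)\,\(-\)\,(extra feedback and switching), and I would exhibit parameter choices realizing each sign: a large feedback delay \(\gls{DelayFeedback}\) and/or large \(\gls{NumLevels}\) makes the difference negative so that \gls{fs} wins, whereas a large codebook \(\gls{SizeCodebookAtLevelMax}\) with small \(\gls{DelayFeedback}\) and \(\gls{TimeResponse}\) makes it positive so that \gls{hs} wins, which establishes the ``may or may not'' part of the statement.

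The main obstacle is the pilot-symbol comparison in the \gls{dor}, because in \gls{hs} the individual levels can lie in different \gls{snr} regimes (\gls{wbr} at coarse levels, \gls{nbr} at fine levels), so \(\sum_l \gls{OverheadFactorHierarchical}\) is a sum of regime-dependent terms rather than a single closed expression. I would sidestep this by the level-\(1\)-dominates argument above, which avoids evaluating the mixed-regime sum explicitly. The remaining subtlety is conceptual rather than technical: the \gls{hs}-versus-\gls{fs} ordering has no universal sign, so the ``proof'' of that part is precisely the explicit demonstration that both orderings are attainable.
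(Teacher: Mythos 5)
Your proposal is correct and takes essentially the same route as the paper: under optimal codeword selection all three strategies yield the same data \gls{snr}, so the effective-rate comparison reduces to ordering the overheads in \eqref{eq:overhead_relative_full}--\eqref{eq:overhead_relative_tracking}, with \gls{ts} smallest and the \gls{fs}-versus-\gls{hs} sign left parameter-dependent. The only difference is one of detail: the paper asserts the overhead ordering by inspection of those equations, whereas you verify it term by term (including the level-1-dominates argument for the mixed-regime pilot sum), and your sketch of when each of \gls{fs} and \gls{hs} wins is exactly what the paper formalizes separately in Lemma~\ref{lm:fs_better_hs}.
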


\begin{proof}
  Due to optimal codeword selection and identical codebooks at the highest codebook level, all considered beam training strategies choose the same codeword and provide the same \gls{snr} for data transmission.
  Thus, one can see from \eqref{eq:rate_effective} that \gls{fs}, \gls{hs}, and \gls{ts} only result in different effective rates due to different overheads.
  Consequently, \gls{ts} provides the highest effective rate because \(\gls{OverheadRelativeTracking} < \gls{OverheadRelativeFull}\) and \(\gls{OverheadRelativeTracking} < \gls{OverheadRelativeHierarchical}\), cf. \cref{eq:overhead_relative_full,eq:overhead_relative_hierarchical,eq:overhead_relative_tracking}.
  However, \(\gls{OverheadRelativeFull} > \gls{OverheadRelativeHierarchical}\) or \(\gls{OverheadRelativeFull} < \gls{OverheadRelativeHierarchical}\) may hold, depending on the system parameters in \eqref{eq:overhead_relative_full} and \eqref{eq:overhead_relative_hierarchical}.
\end{proof}

Interestingly, Lemma~\ref{lm:ts_superior} reveals that \gls{fs} beam training outperforms \gls{hs} beam training if \(\gls{OverheadRelativeFull} < \gls{OverheadRelativeHierarchical}\), which particularly happens, e.g., for fast \gls{ris} response times and large feedback delays.
Further insight on this case is provided by the following lemma, where the above inequality is simplified to a lower bound on \(\gls{DelayFeedback}\).

\begin{lemma}\label{lm:fs_better_hs}
  \Gls{fs} beam training provides a higher effective rate than \gls{hs} beam training if
  \begin{equation}\label{eq:condition_fs_better_hs}
    \gls{DelayFeedback} > \gls{DelayFeedbackBound}
    = \frac{\gls{SizeCodebookAtLevelMax} - \gls{NumCodewordsTotalHierarchical}}{\gls{NumLevels} - 1}
    \left(\gls{TimeResponse} + \gls{DurationSymbolTraining} \gls{NumSymTxLevelMax}\right).
  \end{equation}
\end{lemma}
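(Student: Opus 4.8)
The plan is to lean on Lemma~\ref{lm:ts_superior}, which already shows that under optimal codeword selection with identical top-level codebooks all three strategies deliver the same data \gls{snr}; consequently \gls{fs} and \gls{hs} differ in effective rate \eqref{eq:rate_effective} only through their overall overheads, and \gls{fs} wins exactly when \(\gls{OverheadRelativeFull} < \gls{OverheadRelativeHierarchical}\). The whole proof therefore amounts to rewriting this overhead inequality as the stated lower bound on \(\gls{DelayFeedback}\).

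First I would substitute \eqref{eq:overhead_relative_full} and \eqref{eq:overhead_relative_hierarchical}. Both overheads carry the same additive term \(\gls{OverheadEstimationNormalized}\) and the same strictly positive prefactor \(\gls{SimplificationOverhead}\gls{SizeCodebookAtLevelMax}\), so cancelling them reduces the comparison to the bracketed beam-training overheads \(\gls{OverheadTrainingFull} < \gls{OverheadTrainingHierarchical}\) of \eqref{eq:overhead_training_full} and \eqref{eq:overhead_training_hierarchical}, namely
\begin{equation*}
  \gls{TimeResponse}(1+\gls{SizeCodebookAtLevelMax}) + \gls{DelayFeedback} + \gls{DurationSymbolTraining}\gls{OverheadFactorFull}
  < \gls{TimeResponse}(1+\gls{NumCodewordsTotalHierarchical}) + \gls{NumLevels}\gls{DelayFeedback} + \gls{DurationSymbolTraining}\sum_{l=1}^{\gls{NumLevels}}\gls{OverheadFactorHierarchical}.
\end{equation*}
The standalone \(\gls{TimeResponse}\) offsets cancel; collecting the feedback-delay terms and using \(\gls{NumLevels}>1\) to divide by \(\gls{NumLevels}-1\) then isolates
\begin{equation*}
  \gls{DelayFeedback} > \frac{\gls{TimeResponse}(\gls{SizeCodebookAtLevelMax}-\gls{NumCodewordsTotalHierarchical}) + \gls{DurationSymbolTraining}\left(\gls{OverheadFactorFull}-\sum_{l=1}^{\gls{NumLevels}}\gls{OverheadFactorHierarchical}\right)}{\gls{NumLevels}-1}.
\end{equation*}
Here \(\gls{SizeCodebookAtLevelMax}>\gls{NumCodewordsTotalHierarchical}\) because the full codebook size grows multiplicatively across levels whereas the number of \gls{hs} training codewords grows only additively, so the leading term is positive and the threshold is well posed.

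The step I expect to be the main obstacle is reconciling the pilot-symbol factors \(\gls{OverheadFactorFull}\) and \(\sum_{l}\gls{OverheadFactorHierarchical}\), which are piecewise-defined over the \gls{ior}, \gls{wbr}, and \gls{nbr} regimes and, in \gls{hs}, vary level by level (being largest at the coarse, low-\gls{snr} levels). My plan is to evaluate both schemes with a common per-codeword pilot count \(\gls{NumSymTxLevelMax}\), the maximum number of pilot symbols: \gls{fs} then contributes \(\gls{OverheadFactorFull}=\gls{NumSymTxLevelMax}\gls{SizeCodebookAtLevelMax}\) over its \(\gls{SizeCodebookAtLevelMax}\) codewords, while \gls{hs} contributes \(\sum_{l}\gls{OverheadFactorHierarchical}=\gls{NumSymTxLevelMax}\gls{NumCodewordsTotalHierarchical}\) over its \(\gls{NumCodewordsTotalHierarchical}\) training codewords. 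Their difference collapses to \(\gls{NumSymTxLevelMax}(\gls{SizeCodebookAtLevelMax}-\gls{NumCodewordsTotalHierarchical})\), and substituting it into the threshold factors out \(\gls{SizeCodebookAtLevelMax}-\gls{NumCodewordsTotalHierarchical}\) to give exactly
\begin{equation*}
  \gls{DelayFeedbackBound} = \frac{\gls{SizeCodebookAtLevelMax}-\gls{NumCodewordsTotalHierarchical}}{\gls{NumLevels}-1}(\gls{TimeResponse}+\gls{DurationSymbolTraining}\gls{NumSymTxLevelMax}),
\end{equation*}
as claimed. This identification is exact in the \gls{ior} (where \(\gls{NumSymTxLevelMax}=1\)), and adopting \(\gls{NumSymTxLevelMax}\) as a uniform worst-case pilot count makes \eqref{eq:condition_fs_better_hs} the intended sufficient condition; justifying this uniform substitution rigorously across all three regimes at once is the delicate part of the argument.
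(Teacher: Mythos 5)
Your derivation follows the paper's proof almost step for step---cancel the common factor \(\gls{SimplificationOverhead}\gls{SizeCodebookAtLevelMax}\) and the additive \(\gls{OverheadEstimationNormalized}\), reduce the rate comparison to \(\gls{OverheadTrainingFull} < \gls{OverheadTrainingHierarchical}\), cancel the standalone \(\gls{TimeResponse}\) terms, isolate \(\gls{DelayFeedback}\left(\gls{NumLevels}-1\right)\)---and your final threshold is the paper's. However, the step you yourself flag as unresolved (``justifying this uniform substitution rigorously across all three regimes'') is exactly the content of the paper's proof, and your framing of it points in the wrong direction. \(\gls{NumSymTxLevelMax}\) is not ``the maximum number of pilot symbols''; it is the pilot count at the \emph{highest (finest) codebook level} \(\gls{NumLevels}\), and it is in fact the \emph{smallest} of the per-level counts. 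The paper obtains this by combining the scaling laws \eqref{eq:snr-scaling-law} with the pilot bound \eqref{eq:pilot_symbol_bound}: the codebook size grows with the level, so the optimal-codeword \gls{snr} is non-decreasing in \(l\) (proportional to \(\gls{SizeCodebookAtLevel}\) in the \gls{wbr}, constant in the \gls{nbr}), hence \(\gls{NumSymTxLevel} \geq \gls{NumSymTxLevelNext}\) and therefore \(\gls{NumSymTxLevel} \geq \gls{NumSymTxLevelMax}\) for every \(l\), uniformly over \gls{ior}, \gls{wbr}, and \gls{nbr}. This monotonicity is what legitimizes the substitution: it yields \(\sum_{l=1}^{\gls{NumLevels}} \gls{NumSymTxLevel} \gls{NumCodewordsTraining} \geq \gls{NumSymTxLevelMax} \gls{NumCodewordsTotalHierarchical}\), i.e., replacing every \(\gls{NumSymTxLevel}\) by \(\gls{NumSymTxLevelMax}\) \emph{under}-estimates the \gls{hs} pilot overhead, while the \gls{fs} pilot term equals \(\gls{NumSymTxLevelMax}\gls{SizeCodebookAtLevelMax}\) exactly, because \gls{fs} uses the level-\(\gls{NumLevels}\) codebook and thus needs exactly \(\gls{NumSymTxLevelMax}\) pilots per codeword. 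Establishing ``\gls{fs} overhead smaller than the under-estimated \gls{hs} overhead'' then a fortiori establishes the true inequality, which is precisely why \eqref{eq:condition_fs_better_hs} is a sufficient condition.

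By contrast, reading \(\gls{NumSymTxLevelMax}\) as a ``uniform worst-case pilot count'' suggests you would be \emph{inflating} the \gls{hs} overhead, and beating an inflated opponent proves nothing about the real one; with that reading the key inequality points the wrong way. (Your own correct remark that the \gls{hs} pilot counts are largest at the coarse, low-\gls{snr} levels already contradicts treating the level-\(\gls{NumLevels}\) count as the maximum.) So the structure and the end formula of your proposal are right, but it is missing the one-line justification \(\gls{NumSymTxLevel} \geq \gls{NumSymTxLevelNext}\) from \eqref{eq:snr-scaling-law} and \eqref{eq:pilot_symbol_bound}, which is the idea that closes the gap and fixes the direction of the relaxation.
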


\begin{proof}
  Recall that the codebook for \gls{fs} beam training is identical to that for \gls{hs} beam training at level \(\gls{NumLevels}\).
  Thus, based on \eqref{eq:overhead_training_general} and \eqref{eq:overall-overhead}, \(\gls{OverheadRelativeFull} < \gls{OverheadRelativeHierarchical}\) is equivalent to
  \begin{equation}\label{eq:condition_fs_better_hs_proof}
    \gls{DelayFeedback} \left(\gls{NumLevels} - 1\right)
    > \gls{TimeResponse} \left(\gls{SizeCodebookAtLevelMax} - \gls{NumCodewordsTotalHierarchical}\right)
    + \gls{DurationSymbolTraining} \left(
      \gls{NumSymTxLevelMax} \gls{SizeCodebookAtLevelMax}
      - \sum_{l=1}^{\gls{NumLevels}} \gls{NumSymTxLevel} \gls{NumCodewordsTraining}
    \right).
  \end{equation}
  Moreover, \eqref{eq:snr-scaling-law} and \eqref{eq:pilot_symbol_bound} imply \(\gls{NumSymTxLevel} \geq \gls{NumSymTxLevelNext}\), which can be used to reformulate \eqref{eq:condition_fs_better_hs_proof} as follows
  \begin{equation}\label{eq:condition_fs_better_hs_proof_relaxed}
    \gls{DelayFeedback} \left(\gls{NumLevels} - 1\right)
    > \gls{TimeResponse} \left(\gls{SizeCodebookAtLevelMax} - \gls{NumCodewordsTotalHierarchical}\right)
    + \gls{DurationSymbolTraining} \left(
      \gls{NumSymTxLevelMax} \gls{SizeCodebookAtLevelMax}
      - \gls{NumSymTxLevelMax} \sum_{l=1}^{\gls{NumLevels}} \gls{NumCodewordsTraining}
    \right).
  \end{equation}
  Finally, \eqref{eq:condition_fs_better_hs_proof_relaxed} is simplified to \eqref{eq:condition_fs_better_hs} by using \(\gls{NumCodewordsTotalHierarchical} = \sum_{l=1}^{\gls{NumLevels}} \gls{NumCodewordsTraining}\).
\end{proof}

It is worth noting that the above results are based on the assumption of reliable beam training, which cannot always be guaranteed in practice, cf. Remark~\ref{rm:impact_on_decisions}.
However, since reliable beam training is desired, practical systems are designed such that the number of erroneous codeword selections is small.
Thus, it is expected that Lemmas~\ref{lm:ts_superior} and~\ref{lm:fs_better_hs} generally hold, which is verified by our numerical results in Section~\ref{sec:numerical_results}.

\section{Numerical Results}\label{sec:numerical_results}
\noindent
In the previous sections, our analysis of the tradeoff between the achievable \gls{snr} and the beam training overhead was based on ideal illumination of the subareas and reliable codeword selection during beam training.
In this section, these results are verified by numerical simulations, considering a Rician fading channel model, non-ideal phase-shift designs, and random user trajectories.
For the following evaluation, we employ numerical simulations to compute the \gls{snr} for data transmission in \eqref{eq:snr-data} and the effective ergodic rate in \eqref{eq:rate_effective}, averaged over the channel realizations for a random user trajectory with a total length of \(\qty{1000}{\metre}\).
As a benchmark and upper bound for the beam training performance, the \gls{ris} is configured with narrow beams that always focus on the known user location assuming zero training overhead.

The user trajectory follows the random direction mobility model~\cite{camp2002surveymobilitymodels}, where the user travels along straight paths within the coverage area and randomly changes its direction at the boundary of the coverage area.

Instead of the idealized phase-shift design that perfectly reflects the power collected at the \gls{ris} to the targeted subareas, our numerical results are based on a practical phase-shift design to realize narrow and wide beams, respectively.
More specifically, the codewords of the \gls{ris} beam codebook are realized by the quadratic phase-shift profile from~\cite{jamali2021powerefficiencyoverhead}, which is parameterized to illuminate a particular subarea of the coverage area.
Although the achieved coverage can be improved with optimized phase-shift designs, we adopt the quadratic phase-shift design due to its analytical form and note that the performance gap to optimized designs is generally small~\cite{laue2021irsassistedactive,ghanem2022optimizationbasedphasea}.

Furthermore, we consider a geometric Rician fading channel model, where the \gls{bs}-\gls{ris} channel and the \gls{ris}-user channel are given by
\begin{align}
  \mt{H}_i &= \sqrt{\gls{PathlossIncident}} (\bar{\mt{H}}_i + \tilde{\mt{H}}_i)\label{eq:channel_incident}\\
  \vt{h}_r &= \sqrt{\gls{PathlossReflected}} (\bar{\vt{h}}_r + \tilde{\vt{h}}_r)\label{eq:channel_reflected},
\end{align}
respectively, and
\begin{align}
  \bar{\mt{H}}_i &= \sqrt{\gls{FactorRicianIncident}/(\gls{FactorRicianIncident} + 1)} e^{j\gls{PhaseLosIncident}}
  \gls{ArrayResponseArrivalRis}(\gls{AngleIncidentElevation}, \gls{AngleIncidentAzimuth})
  \glsConjTrans{ArrayResponseDepartureAp}(\omega)\label{eq:channel_incident_los}\\
  \tilde{\mt{H}}_i &= \sqrt{1/((\gls{FactorRicianIncident} + 1) \gls{NumScatterersInc})}
  \sum_{c=1}^{\gls{NumScatterersInc}} \gls{ChannelNlosIncident}
    \gls{ArrayResponseArrivalRis}(\theta_{i,c}, \phi_{i,c})
    \glsConjTrans{ArrayResponseDepartureAp}(\omega_c)\label{eq:channel_incident_nlos}\\
  \bar{\vt{h}}_r &= \sqrt{\gls{FactorRicianReflected}/(\gls{FactorRicianReflected} + 1)} e^{-j\gls{PhaseLosReflected}} \gls{ArrayResponseDepartureRis}(\gls{AngleReflectedElevation}, \phi_r)\label{eq:channel_reflected_los}\\
  \tilde{\vt{h}}_r &= \sqrt{1/((\gls{FactorRicianReflected} + 1) \gls{NumScatterersRef})} \sum_{c=1}^{\gls{NumScatterersRef}} \gls{ChannelNlosReflected} \gls{ArrayResponseDepartureRis}(\theta_{r,c}, \phi_{r,c}).\label{eq:channel_reflected_nlos}
\end{align}
In \cref{eq:channel_incident,eq:channel_incident_los,eq:channel_incident_nlos,eq:channel_reflected,eq:channel_reflected_los,eq:channel_reflected_nlos}, for \(s\in\{i,r\}\), \(\gls{Pathloss}\), \(\gls{FactorRician}\), \(\gls{PhaseLos}\), and \(\gls{NumScatterers}\) denote the path loss, Rician K-factor, phase of the \gls{los} path, and number of \gls{nlos} paths, respectively, and \(\gls{AodBs}\) denotes the \gls{aod} at the \gls{bs}.
Moreover, the \glspl{aoa} and \glspl{aod} of the \gls{nlos} paths are indexed by \(c \in \{1, \dotsc, \gls{NumScatterers}\}\), and the path coefficient of the \(c\)th \gls{nlos} path is given by \(\gls{ChannelNlos} \sim \CN{0, 1}\).
The \((q_x, q_y)\)th element of \(\gls{ArrayResponseArrivalRis}(\gls{AngleIncidentElevation}, \gls{AngleIncidentAzimuth})\) and \(\gls{ArrayResponseDepartureRis}(\theta_r, \phi_r)\) as well as the \(n\)th element of \(\gls{ArrayResponseDepartureAp}(\gls{AodBs})\) are given by
\(
  e^{j\frac{2\pi}{\gls{Wavelength}}(d_x q_x \sin(\gls{AngleIncidentElevation}) \cos(\phi_i) + d_y q_y \sin(\gls{AngleIncidentElevation}) \sin(\phi_i))}
\),
\(
  e^{-j\frac{2\pi}{\gls{Wavelength}}(d_x q_x \sin(\theta_r) \cos(\phi_r) + d_y q_y \sin(\theta_r) \sin(\phi_r))}
\)
and
\(e^{-j\frac{2\pi}{\gls{Wavelength}} d n \sin(\omega)}\), respectively, where \(q_x, q_y\in\{0, \dotsc, \sqrt{\gls{NumRis}}-1\}\) denote two-dimensional indices of the \gls{ris} unit cells and \(n\in\{0, \dotsc, \gls{NumAp}-1\}\).

\subsection{System Parameters}
\noindent
Our theoretical analysis has shown that the performance tradeoff between overhead and achievable \gls{snr} depends on several different system parameters, including the codebook size, coverage area size, \gls{ris} size, transmit power, and number of pilot symbols.
Therefore, in the following, we focus on typical \gls{ris} deployments and realistic timing constraints in order to evaluate practical scenarios.
We consider user velocities \(\gls{Velocity} \in [\qty{3}{\kilo\metre/\hour},\qty{100}{\kilo\metre/\hour}]\) and \gls{ris} response times \(\gls{TimeResponse} \in [\qty{100}{\nano\second},\qty{1}{\milli\second}]\) in order to capture different mobility scenarios and \gls{ris} technologies.
Unless otherwise noted, the feedback delay is set to \(\gls{DelayFeedback} = \qty{0.1}{\milli\second}\), which is targeted by 3GPP for future systems~\cite{hassan2021keytechnologiesultra,ji2021severalkeytechnologies}.

We adopt codebook sizes up to \(\gls{SizeCodebookAtLevel} = 2048\), which corresponds to a hierarchical codebook with up to \(\gls{NumLevels} = 5\) levels assuming \(\gls{NumCodewordsInitial} = 8\) and \(\gls{NumCodewordsFactor} = 4\).
When comparing the performances of the considered beam training strategies, \gls{fs} and \gls{ts} beam training use the same codewords as specified by the highest codebook level for \gls{hs} beam training.

Moreover, the \gls{ris}, the \gls{bs}, and the center of the coverage area are located at coordinates \((\qty{0}{\metre}, \qty{0}{\metre}, \qty{0}{\metre})\), \((\qty{0}{\metre}, \qty{40}{\metre}, \qty{50}{\metre})\), and \((\qty{-10}{\metre}, \qty{-20}{\metre}, \qty{100}{\metre})\), respectively.
The coverage area has dimensions \(\gls{AreaY} = \qty{100}{\metre}\) and \(\gls{AreaZ} = \qty{50}{\metre}\) and is partitioned into subareas of equal size, resulting in \(\gls{RatioAreaCodebook} = \sqrt{\left(\gls{AreaY}\gls{SizeCodebookAtLevelMaxX}\right)^2+\left(\gls{AreaZ}\gls{SizeCodebookAtLevelMaxY}\right)^2}\).
The \gls{ris} unit cell and \gls{bs} antenna spacing are given by \(d=d_x=d_y=\gls{Wavelength} /2\).
The \gls{bs} has \(\gls{NumAp} = 16\) antenna elements and steers its main lobe towards the center of the \gls{ris}.
For the channels, we assume \(\gls{NumScatterersInc}=\gls{NumScatterersRef}=6\) and \(K_i = K_r = 4\), and \(\gls{PathlossIncident}\) and \(\gls{PathlossReflected}\) are given by the free-space path loss model.
The azimuth angles \((\phi_{i,c},\phi_{r,c})\) and elevation angles \((\theta_{i,c},\theta_{r,c})\) of the \gls{nlos} paths are drawn from uniform distributions \(\mathcal{U}(0,2\pi)\) and \(\mathcal{U}(0,\pi/2)\), respectively.
Furthermore, we assume antenna element gains \(\gls{GainTransmitter} = \gls{GainReceiver} = 1\), transmit power \(\gls{PowerTransmitter} = \qty[qualifier-mode=combine]{15}{\deci\bel\of{m}}\), channel estimation overhead \(\gls{OverheadEstimation} = 40\gls{DurationSymbolTraining}\)~\cite{wang2020compressedchannelestimation}, frame factor \(\gls{FactorFrameDesign} = 0.15\), carrier frequency \SI{28}{\giga\hertz}, signal bandwidth \(\gls{BandwidthPilot} = 1/\gls{DurationSymbolTraining} = \qty{1}{\mega\hertz}\), and \(\gls{NoisePower} = \gls{NoisePowerSpectralDensity}\gls{NoiseFigure}\gls{BandwidthPilot}\) with noise power spectral density \(\gls{NoisePowerSpectralDensity} = \qty{-174}{\dBm/\hertz}\) and noise figure \(\gls{NoiseFigure}=\qty{6}{\deci\bel}\).
Additional parameters are individually given for each figure.

\subsection{SNR Regimes}\label{sec:numerical_results_snr_regimes}
\begin{figure}[!t]
  \centering
  \includegraphics{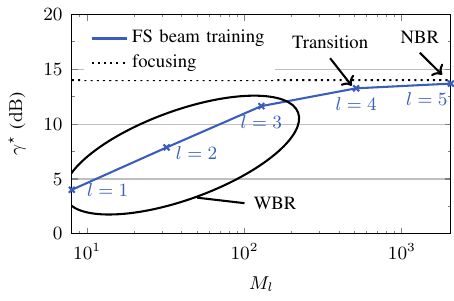}
  \caption{\Gls{snr} in \eqref{eq:snr-data} as a function of codebook size for \(\gls{NumSymTxLevel} = 8\) and \(\gls{Velocity} = \qty{3}{\kilo\metre/\hour}\). The corresponding levels of a hierarchical codebook are indicated by cross markers.}\label{fig:snr_verification}
\end{figure}
\noindent
We first evaluate the impact of the codebook size on the achievable \gls{snr} by comparing the numerical results for beam training and with that for focusing.
In Fig.~\ref{fig:snr_verification}, we show the results for \gls{fs} beam training as a function of the codebook size, and indicate the corresponding codebook levels of a hierarchical codebook by cross markers.
One can see that \(\gls{SnrReceivedData}\) increases with \(\gls{SizeCodebookAtLevel}\) if the codebook is small but saturates at the upper bound for large \(\gls{SizeCodebookAtLevel}\).
The reason is that, for a given coverage area, the size of the subareas is inversely proportional to the codebook size, leading to a change of \gls{snr} regime as \(\gls{SizeCodebookAtLevel}\) grows.
For example, note that the \gls{snr} achieved by beam training is similar to that for focusing if \(\gls{SizeCodebookAtLevel} > 500\).
In this case, the subareas are sufficiently small to be approximately illuminated by narrow beams.
However, a lower \gls{snr} is observed for \(\gls{SizeCodebookAtLevel} < 500\) because wide beams are required to distribute the power collected by the \gls{ris} to larger subareas.
These results accurately match the scaling laws in Proposition~\ref{prop:snr-scaling-law}, which states that the \gls{snr} linearly scales with \(\gls{SizeCodebookAtLevel}\) in the \gls{wbr} and is independent of \(\gls{SizeCodebookAtLevel}\) in the \gls{nbr}.
Moreover, the markers Fig.~\ref{fig:snr_verification} show that the system operates in different \gls{snr} regimes during \gls{hs} beam training with \(\gls{NumLevels} = 5\) levels.
In particular, the \gls{snr} is in the \gls{wbr}, the transition regime, and the \gls{nbr} for \(l \leq 3\), \(l=4\), and \(l=5\), respectively, which confirms the analytical results shown in Fig.~\ref{fig:footprints_subareas_regimes}.

\subsection{Overhead Regimes}
\begin{figure}[!t]
  \centering
    \includegraphics{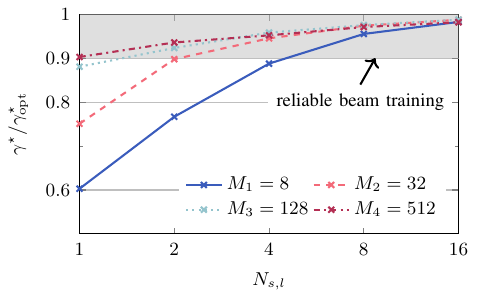}
  \caption{Beam training reliability as a function of the number of pilot symbols for \(\gls{Velocity} = \qty{3}{\kilo\metre/\hour}\) and different codebook sizes corresponding to \(\gls{NumLevels} = 4\) codebook levels.}\label{fig:pilots}
\end{figure}
\noindent
In order to evaluate the impact of the number of pilot symbols on the \gls{snr} achieved by beam training, we adopt the ratio \(\gls{SnrReceivedData}/\gls{SnrReceivedDataCodebookBound}\) as a measure for the beam training reliability, where \(\gls{SnrReceivedDataCodebookBound}\) denotes the \gls{snr} for data transmission achieved by optimal codeword selection.
Thus, \(\gls{SnrReceivedDataCodebookBound}\) represents the \gls{snr} that can be achieved without the impact of channel fading or receiver noise.
Moreover, we assume that beam training can be considered reliable if \(\gls{SnrReceivedData}/\gls{SnrReceivedDataCodebookBound} > 0.9\).
For the considered system parameters, Fig.~\ref{fig:pilots} shows that reliable beam training is only achieved when \(\gls{NumSymTxLevel}\) or \(\gls{SizeCodebookAtLevel}\) is large.
Otherwise, the beam training reliability is significantly reduced, which can be explained by a larger number of erroneous codeword selections owing to a low \gls{snr}.
These observations are consistent with \eqref{eq:pilot_symbol_bound} and the definitions of \gls{ior} and \gls{dor} in Section~\ref{sec:regimes-overhead}.
For example, the numerical results in Fig.~\ref{fig:pilots} reveal that \gls{ior} holds for \(\gls{SizeCodebookAtLevel} = 512\) because \(\gls{SnrReceivedData}/\gls{SnrReceivedDataCodebookBound} > 0.9\) is achieved with \(\gls{NumSymTxLevel} = 1\).
In contrast, \gls{dor} holds for \(\gls{SizeCodebookAtLevel} \leq 128\) where \(\gls{NumSymTxLevel} > 1\) is required to obtain \(\gls{SnrReceivedData}/\gls{SnrReceivedDataCodebookBound} > 0.9\).

\subsection{Impact of RIS Response Time and User Velocity}
\begin{figure}[!t]
  \centering
  \includegraphics{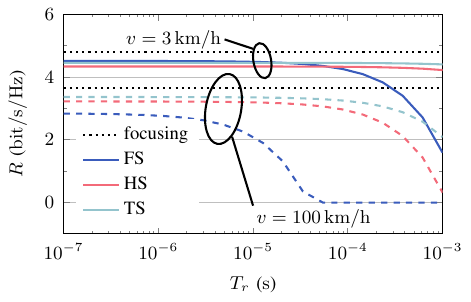}
  \caption{Effective rate as a function of the \gls{ris} response time for \(\gls{NumSymTxLevel} = 8\), \(\gls{NumLevels} = 4\), \(\gls{Velocity} = \qty{3}{\kilo\metre/\hour}\) (solid lines), and \(\gls{Velocity} = \qty{100}{\kilo\metre/\hour}\) (dashed lines).}\label{fig:order_typical}
\end{figure}
\noindent
Since our theoretical analysis has shown that both the \gls{ris} response time and the user velocity can have a significant impact on the system performance, we evaluate the effective rate for different values of \(\gls{TimeResponse}\) and \(\gls{Velocity}\) in Fig.~\ref{fig:order_typical}.
For \(\gls{Velocity} = \qty{3}{\kilo\metre/\hour}\), one can see that all considered beam training strategies provide a similar effective rate as focusing if \(\gls{TimeResponse} < \qty{100}{\micro\second}\), and only \gls{fs} beam training results in reduced rates if \(\gls{TimeResponse} > \qty{100}{\micro\second}\).
In contrast, one observes significant performance differences for \(\gls{Velocity} = \qty{100}{\kilo\metre/\hour}\).
These results suggest that, for a slowly moving user, the beam training overhead is negligible if the \gls{ris} response time is less than \qty{100}{\micro\second}, which can be realized with \glspl{ris} based on PIN diodes or MEMS, cf. Table~\ref{tab:technology}.
In this case, \gls{fs} is the preferred beam training strategy because decision and tracking errors may reduce the \gls{snr} achieved by \gls{hs} and \gls{ts} beam training, respectively.
Although not significant, this effect can be seen in Fig.~\ref{fig:order_typical} for \(\gls{TimeResponse} < \qty{10}{\micro\second}\) and \(\gls{Velocity} = \qty{3}{\kilo\metre/\hour}\), where the effective rates for \gls{hs} and \gls{ts} beam training are slightly lower than that for \gls{fs} beam training.
However, \gls{fs} beam training provides the worst performance if \(\gls{Velocity} = \qty{100}{\kilo\metre/\hour}\).
In this case, Fig.~\ref{fig:order_typical} demonstrates that low-overhead beam training is essential for achieving high effective rates, especially for \gls{ris} technologies with a slow response time in the order of milliseconds, e.g., based on liquid crystal.

\begin{figure}[!t]
  \centering
  \includegraphics{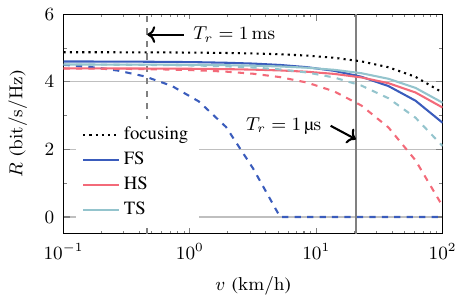}
  \caption{Effective rate as a function of the user velocity for \(\gls{NumLevels} = 4\), \(\gls{TimeResponse} = \qty{1}{\micro\second}\) (solid lines), and \(\gls{TimeResponse} = \qty{1}{\milli\second}\) (dashed lines). The values of the bound \(\gls{VelocityBound}\) in \eqref{eq:velocity_bound} for \(\gls{OverheadThreshold} = 0.1\) are indicated by the vertical lines.}\label{fig:relevance}
\end{figure}

More insight on the impact of the overhead on the system performance is provided in Fig.~\ref{fig:relevance}, which shows the effective rate as a function of the user velocity for \(\gls{TimeResponse} \in \{\qty{1}{\micro\second}, \qty{1}{\milli\second}\}\).
One can see that a fast \gls{ris} with \(\gls{TimeResponse} = \qty{1}{\micro\second}\) (solid lines in Fig.~\ref{fig:relevance}) enables beam training for high user velocities without significant performance loss, in particular, if \gls{hs} or \gls{ts} beam training is applied.
In contrast, employing a slow \gls{ris} with \(\gls{TimeResponse} = \qty{1}{\milli\second}\) reduces the performance for all considered beam training strategies.
The most significant impact is observed for \gls{fs} beam training, where the largest velocity that has negligible impact on \(\gls{RateEffective}\) decreases from about \(\qty{20}{\kilo\metre/\hour}\) to \(\qty{0.4}{\kilo\metre/\hour}\).
As indicated by the vertical lines in Fig.~\ref{fig:relevance}, these results accurately match the analytical bound \(\gls{VelocityBound}\) in \eqref{eq:velocity_bound} for \(\gls{OverheadThreshold} = 0.1\).

\subsection{Impact of Feedback Delay}
\noindent
Finally, we evaluate the effective rate as a function of the feedback delay in Fig.~\ref{fig:order_special}, considering a fast \gls{ris} with response times \(\gls{TimeResponse} = \qty{1}{\micro\second}\) (solid lines) and \(\gls{TimeResponse} = \qty{30}{\micro\second}\) (dashed lines).
The results show that the effective rates are approximately constant for \(\gls{DelayFeedback} < \qty{1}{\milli\second}\), but decrease for larger values of \(\gls{DelayFeedback}\).
Moreover, one can see that \gls{ts} provides the best beam training performance, whereas \gls{hs} beam training outperforms \gls{fs} beam training or vice versa, depending on the values of \(\gls{DelayFeedback}\) and \(\gls{TimeResponse}\).
This is consistent with our theoretical results of Lemma~\ref{lm:ts_superior} and can be explained by the fact that \gls{hs} beam training involves multiple feedback transmissions, which cause a large overhead if \(\gls{DelayFeedback}\) is large.
Thus, if the \gls{ris} response time is relatively short, \gls{fs} beam training can have a lower overhead than \gls{hs} beam training, leading to a larger effective rate.
In Fig.~\ref{fig:order_special}, this happens for \(\gls{DelayFeedback} > \qty{0.9}{\milli\second}\) (\(\gls{TimeResponse} = \qty{1}{\micro\second}\)) and \(\gls{DelayFeedback} > \qty{6}{\milli\second}\) (\(\gls{TimeResponse} = \qty{30}{\micro\second}\)).
As indicated by the vertical lines, these results are accurately captured by the lower bound \(\gls{DelayFeedbackBound}\) in \eqref{eq:condition_fs_better_hs}.

\begin{figure}[!t]
  \centering
  \includegraphics{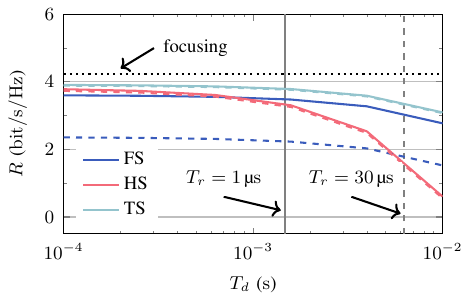}
  \caption{Effective rate as a function of the feedback delay for \(\gls{NumLevels} = 4\), \(\gls{Velocity} = \qty{50}{\kilo\metre/\hour}\), \(\gls{TimeResponse} = \qty{1}{\micro\second}\) (solid lines), and \(\gls{TimeResponse} = \qty{30}{\micro\second}\) (dashed lines). The values of the bound \(\gls{DelayFeedbackBound}\) in \eqref{eq:condition_fs_better_hs} are indicated by the vertical lines.}
  \label{fig:order_special}
\end{figure}

\section{Conclusion}\label{sec:conclusion}
\noindent
This paper studied the performance tradeoff between the overhead and achievable \gls{snr} in \gls{ris} beam training, considering the size of the coverage, the \gls{ris} response time, and the feedback delay.
In particular, we derived scaling laws for the achievable \gls{snr} based on narrow and wide beam designs, and showed that the overhead for reliable beam training is either dependent on or independent of the \gls{snr}.
Based on these insights, we investigated the impact of the overhead for \gls{fs}, \gls{hs}, and \gls{ts} beam training on the effective rate, and provided an upper bound on the user velocity for which the overhead is negligible.
Moreover, when the overhead is not negligible, we showed that \gls{ts} beam training achieves higher effective rates than \gls{hs} and \gls{fs} beam training, while we revealed that \gls{hs} beam training may or may not outperform \gls{fs} beam training.
Furthermore, our theoretical results were verified by numerical simulations.
In particular, our numerical results demonstrated that fast \glspl{ris} facilitate \gls{fs} beam training, whereas large feedback delays can significantly reduce the performance for \gls{hs} beam training.

\appendix
\section*{}
\subsection{SNR Scaling Laws}\label{sec:proof-snr-scaling}
\noindent
In this section, we derive the scaling laws of the achievable \gls{snr} for wide beams and narrow beams, respectively.

\subsubsection{Wide-Beam Design}\label{sec:proof-snr-scaling-wbr}
\noindent
Assuming free-space propagation and a precoding vector \(\gls{PrecodingAp}\) that aligns the main lobe of the \gls{bs} with the \gls{los} path to the \gls{ris}, a signal transmitted at the \gls{bs} causes power density
\(
  \gls{PowerDensityIncident}
  = \frac{\gls{PowerTransmitter} \gls{GainTransmitter} \gls{NumAp}}{4 \pi \gls{DistanceIncident}^2}
\)
at the \gls{ris}, where \(\gls{DistanceIncident}\) denotes the distance between the \gls{bs} and the \gls{ris}.
Moreover, the effective area of the \gls{ris} is given by \(\gls{AreaRisEffective} = \gls{LengthUnitCellX} \gls{LengthUnitCellY} \gls{NumRis} \gls{AreaRisFactor}\), where \(\gls{AreaRisFactor}\) accounts for the \gls{aoa} and polarization of the incident wave~\cite{najafi2020physicsbasedmodeling,jamali2022lowzerooverhead}.
Thus, the power collected by the \gls{ris} is given by
\(
  \gls{PowerWaveIncident} = \gls{PowerDensityIncident} \gls{LengthUnitCellX} \gls{LengthUnitCellY} \gls{NumRis} \gls{AreaRisFactor}
\)~\cite{laue2022performancetradeoffris}.
Furthermore, for the idealized wide-beam design and subareas of equal size, the \(m\)th codeword at the \(l\)th codebook level perfectly distributes the collected power \(\gls{PowerWaveIncident}\) across an area of size \(\gls{Area}/\gls{SizeCodebookAtLevel}\), where \(\gls{Area}/\gls{SizeCodebookAtLevel}\) is larger than \(\gls{PathThroughFootprint}\), i.e., the beam footprint that results from focusing on one single direction.
Then, the power density at a location within the subarea is given by
\(
  \gls{PowerDensityReflected} = \gls{PowerWaveIncident} \frac{\gls{SizeCodebookAtLevel}}{\gls{Area}}
\)~\cite{laue2022performancetradeoffris},
which leads to achievable \gls{snr}
\begin{equation}\label{eq:snr-normalized-wbr}
  \gls{SnrReceivedNormalized}
    = \frac{\gls{PowerDensityReflected} \gls{AreaReceiveAntennaEffective}}{\gls{NoisePower}}
    = \frac{\gls{PowerTransmitter} \gls{GainTransmitter} \gls{NumAp} \gls{GainReceiver}}{\gls{NoisePower}}
    \frac{\glsSquared{Wavelength} \gls{LengthUnitCellX} \gls{LengthUnitCellY} \gls{AreaRisFactor}}{(4 \pi \gls{DistanceIncident})^2 \gls{Area}} \gls{NumRis} \gls{SizeCodebookAtLevel},
\end{equation}
where \(\gls{AreaReceiveAntennaEffective} = \glsSquared{Wavelength} \gls{GainReceiver}/(4 \pi)\) denotes the effective receive antenna area.

\subsubsection{Narrow-Beam Design}\label{sec:proof-snr-scaling-nbr}
\noindent
A codebook with narrow beams corresponds to sampling the coverage area in the angular space, where the \(m\)th codeword at the \(l\)th codebook level results in a narrow beam focusing on a particular direction \((\theta_{m,l}, \phi_{m,l})\).
Here, we assume that \((\theta_{m,l}, \phi_{m,l})\) denotes the direction to the center of the \(m\)th subarea for the \(l\)th codebook level.
Then, the achievable \gls{snr} is given by~\cite{najafi2020physicsbasedmodeling}
\begin{equation}\label{eq:snr-nbr}
  \gls{SnrReceivedNormalized} = \frac{\gls{PowerTransmitter} \gls{GainTransmitter} \gls{NumAp} \gls{GainReceiver}}{\gls{NoisePower}}
  \frac{\glsSquared{Wavelength} \glsSquared{Wavelength} \abs*{\gls{GainRisWithAngles}}^2}{\left(4 \pi \gls{DistanceIncident}\right)^2 \left(4 \pi \gls{DistanceReflected}\right)^2},
\end{equation}
where \(\gls{DistanceReflected}\) and \(\abs*{\gls{GainRisWithAngles}}^2\) denote the \gls{ris}-user distance and the \gls{ris} gain for the \(m\)th codeword at the \(l\)th codebook level, respectively.
For a user located in direction \((\theta_r, \phi_r)\), the \gls{ris} gain is given by~\cite{najafi2020physicsbasedmodeling}
\begin{equation}\label{eq:ris_response_magnitude}
  \abs*{\gls{GainRisWithAngles}}^2 = \abs*{\gls{GainUc}}^2 \abs*{\frac{\sin(\sqrt{Q} W_x)}{\sin(W_x)} \frac{\sin(\sqrt{Q} W_y)}{\sin(W_y)}}^2,
\end{equation}
where \(W_o = \frac{\pi d_o}{\gls{Wavelength}}(w_o(\theta_r, \phi_r) - w_o(\theta_{m,l}, \phi_{m,l}))\), \(o \in \{x,y\}\), for \(w_x(\theta, \phi) = \sin(\theta)\cos(\phi)\) and \(w_y(\theta, \phi) = \sin(\theta)\sin(\phi)\).
Moreover, if the subareas of the coverage area are significantly smaller than the footprint of a narrow beam, one can always find a codeword that approximately provides the maximum \gls{ris} gain for any direction \((\theta_r, \phi_r)\).
Since \(\abs*{\gls{GainRisWithAngles}}^2 \leq \abs*{\gls{GainUc}}^2 \glsSquared{NumRis}\)~\cite{najafi2020physicsbasedmodeling}, the \gls{snr} in \eqref{eq:snr-nbr} can be written as
\begin{equation}\label{eq:snr-normalized-nbr}
  \gls{SnrReceivedNormalized} \approx \frac{\gls{PowerTransmitter} \gls{GainTransmitter} \gls{NumAp} \gls{GainReceiver}}{\gls{NoisePower}}
  \frac{\glsSquared{Wavelength} \glsSquared{Wavelength} \abs*{\gls{GainUc}}^2 \glsSquared{NumRis}}{\left(4 \pi \gls{DistanceIncident}\right)^2 \left(4 \pi \gls{DistanceReflected}\right)^2}.
\end{equation}

\subsection{Proof of Lemma~\ref{lm:negligible_overhead}}\label{sec:proof-negligible-overhead}
\noindent
We need to show that a user velocity
\(
  \gls{Velocity} > \frac{
  \gls{ParabolaParameterOne} + \gls{ParabolaParameterTwo}
  + \sqrt{
    (\gls{ParabolaParameterOne}+\gls{ParabolaParameterTwo})^2
    - 4\gls{ParabolaParameterOne}\gls{ParabolaParameterTwo}\gls{OverheadThreshold}
  }
  }{
    2\gls{ParabolaParameterOne}\gls{ParabolaParameterTwo}
  }
\) results in zero time for data transmission.
For the considered transmission protocol, cf. Fig.~\ref{fig:protocol}, this case is true for \(\gls{DurationFrame} \leq \gls{OverheadTraining} + \gls{OverheadEstimation}\), which is equivalent to \(\frac{1}{\gls{ParabolaParameterOne} (1 + \gls{OverheadEstimation}/\gls{OverheadTraining})} \leq \gls{Velocity}\).
Thus, the proof is completed if we show that
\(
  \frac{1}{\gls{ParabolaParameterOne} (1 + \gls{OverheadEstimation}/\gls{OverheadTraining})} \leq \frac{
    \gls{ParabolaParameterOne} + \gls{ParabolaParameterTwo}
    + \sqrt{
      (\gls{ParabolaParameterOne}+\gls{ParabolaParameterTwo})^2
      - 4\gls{ParabolaParameterOne}\gls{ParabolaParameterTwo}\gls{OverheadThreshold}
    }
  }{
    2\gls{ParabolaParameterOne}\gls{ParabolaParameterTwo}
  }
\) holds.
Since \(\gls{OverheadTraining} \geq 0\), \(\gls{OverheadEstimation} \geq 0\), and \(\gls{OverheadThreshold} < 1\), the above condition is certainly true if
\(
  \gls{ParabolaParameterOne} - \gls{ParabolaParameterTwo}
  + \sqrt{
    (\gls{ParabolaParameterOne}+\gls{ParabolaParameterTwo})^2
    - 4\gls{ParabolaParameterOne}\gls{ParabolaParameterTwo}
  } \geq 0
\).
For \(\gls{ParabolaParameterOne} > 0\) and \(\gls{ParabolaParameterTwo} > 0\), this can be rewritten as
\(
  \gls{ParabolaParameterOne} - \gls{ParabolaParameterTwo}
  + \abs*{\gls{ParabolaParameterOne} - \gls{ParabolaParameterTwo}} \geq 0
\), which is true.

\subsection{Proof of Corollary~\ref{cr:condition_no_impact}}\label{sec:proof-overhead-bound}
\noindent
In order to find an upper bound
\(
  \gls{OverheadTrainingBound}
  \geq \max_{\mathsf{x} \in \{\text{\gls{fs}},\text{\gls{hs}},\text{\gls{ts}}\}} \gls{OverheadTrainingX}
\), we note that \gls{dor} and \gls{wbr} cause the largest overhead because they require a larger number of pilot symbols compared to \gls{ior} and \gls{nbr}.
Moreover, a comparison of \eqref{eq:overhead_training_full} and \eqref{eq:overhead_training_tracking} shows that the overhead for \gls{fs} beam training is always larger than that for \gls{ts} beam training if \(\gls{SizeCodebookAtLevel} > 8\) is assumed.
Thus, it is sufficient to compare the overheads for \gls{fs} beam training and \gls{hs} beam training.

The considered \gls{hs} beam training is based on a codebook that grows by a factor of \(\gls{NumCodewordsFactor}\) with each level.
Thus, in the \gls{wbr}, the \gls{snr} increases by \(\gls{NumCodewordsFactor}\) at each codebook level, which implies that the number of required pilot symbols decreases as \(\gls{NumSymTxLevelNext} = \gls{NumSymTxLevel}/\gls{NumCodewordsFactor}\).
Rewriting the latter as \(\gls{NumSymTxLevel} = \gls{NumCodewordsFactorScaling} \gls{NumSymTxLevelMax}\) leads to an upper bound for \(\gls{OverheadTrainingHierarchical}\) as follows
\begin{subequations}
  \begin{align}
    \gls{OverheadTrainingHierarchical}
    &= \gls{TimeResponse} \left(1 + \gls{NumCodewordsTotalHierarchical}\right)
    + \gls{DelayFeedback} \gls{NumLevels}
    + \gls{DurationSymbolTraining}
    \sum_{l=1}^{\gls{NumLevels}} \gls{NumSymTxLevel} \gls{NumCodewordsTraining}\\
    &\overset{(a)}{=} \gls{TimeResponse} \left(1 + \gls{NumCodewordsTotalHierarchical}\right)
    + \gls{DelayFeedback} \gls{NumLevels}
    + \gls{DurationSymbolTraining} \gls{NumSymTxLevelMax} \left(
      \gls{SizeCodebookAtLevelMax}
      + \frac{\gls{NumCodewordsFactorPowerHigh} - 1}{1 - 1/\gls{NumCodewordsFactor}}
    \right)\\
    &\overset{(b)}{<} \gls{TimeResponse} \left(1 + \gls{NumCodewordsTotalHierarchical}\right)
    + \gls{DelayFeedback} \gls{NumLevels}
    + \gls{DurationSymbolTraining} \gls{NumSymTxLevelMax} \gls{SizeCodebookAtLevelMax} 1.25,\label{eq:overhead_training_hierarchical_bound}
  \end{align}
\end{subequations}
where \((a)\) is based on the partial sum of a geometric series and \((b)\) results from \(\gls{NumCodewordsFactor} \geq 2\) and \(\gls{SizeCodebookAtZero} > 8\).
Then, combining \eqref{eq:overhead_training_hierarchical_bound} and \(\gls{OverheadTrainingFull} = \gls{DurationSymbolTraining} \gls{NumSymTxLevelMax} \gls{SizeCodebookAtLevelMax}
+ \gls{TimeResponse} \left(1 + \gls{SizeCodebookAtLevelMax}\right)
+ \gls{DelayFeedback}\) yields
\begin{equation}\label{eq:overhead_upper_bound}
  \gls{OverheadTrainingBound}
  = 1.25 \gls{NumSymTxLevelMax} \gls{SizeCodebookAtLevelMax} \gls{DurationSymbolTraining}
  + \left(1 + \gls{SizeCodebookAtLevelMax}\right) \gls{TimeResponse}
  + \gls{NumLevels} \gls{DelayFeedback}.
\end{equation}

\bibliographystyle{IEEEtran}
\bibliography{IEEEabrv,bibtexlib}

\end{document}